\DeclareMathOperator{\sgn}{sgn}
\newcommand{\face}[2]{{\Delta_{#1}(#2)}}
\renewcommand{\star}[2]{{\mathrm{St}_{#1}(#2)}}
\DeclareMathOperator{\im}{im}
\newcommand{\pr}[1]{\mathrm{pr}\left(#1\right)}
\newcommand{\diam}[1]{{\mathrm{diam}}(#1)}
\newcommand{\dist}[2]{d\left(#1,#2\right)}
\newcommand{\pth}[2]{(#1\mathbin{:}#2)}
\newcommand{\cpth}[2]{(#1\mathbin{\updownarrow}#2)}
\DeclareMathOperator{\usetop}{\uparrow}
\newcommand{\uset}[1]{\usetop(#1)}
\DeclareMathOperator{\dsetop}{\downarrow}
\newcommand{\dset}[1]{\dsetop(#1)}
\newcommand{\cdia}[1]{\Diamond\left(#1\right)}
\newcommand{\rest}[1]{|_{#1}}
\newcommand{\cdist}[1]{d_{\updownarrow}\left(#1\right)}
\renewcommand{\phi}{\varphi}
\renewcommand{\rho}{\varrho}
\newcommand{\tmax}{T_{\mathrm{max}}}
\newcommand{\pthr}{p^*_{\mathrm{th}}}
\newcommand{\psusval}{0.99\pm 0.02\%}
\newcommand{\gammaval}{0.855\pm 0.010}
\newcommand{\psusvaltoom}{1.98\pm 0.02\%}
\newcommand{\gammavaltoom}{0.80\pm 0.01}
\newtheorem{definition}{Definition}
\theoremstyle{definition}
\newtheorem{lemma}[definition]{Lemma}
\theoremstyle{lemma}
\newtheorem{theorem}[definition]{Theorem}
\theoremstyle{theorem}
\newcommand\gobblepars{\@ifnextchar\par{\expandafter\gobblepars\@gobble}{}}
\newcommand{\sectionprl}[1]{{\it #1.---}\gobblepars}
\begin{document}

\title{Cellular-automaton decoders with provable thresholds for topological codes}

\author{Aleksander Kubica}
\affiliation{Perimeter Institute for Theoretical Physics, Waterloo, ON N2L 2Y5, Canada}
\affiliation{Institute for Quantum Computing, University of Waterloo, Waterloo, ON N2L 3G1, Canada}
\author{John Preskill}
\affiliation{Institute for Quantum Information \& Matter, California Institute of Technology,  Pasadena, CA 91125, USA}
\affiliation{Walter Burke Institute for Theoretical Physics, California Institute of Technology,  Pasadena, CA 91125, USA}

\date{\today}

\begin{abstract}
We propose a new cellular automaton (CA), the Sweep Rule, which generalizes Toom's rule to any locally Euclidean lattice.
We use the Sweep Rule to design a local decoder for the toric code in $d\geq 3$ dimensions, the Sweep Decoder, and rigorously establish a lower bound on its performance.
We also numerically estimate the Sweep Decoder threshold for the three-dimensional toric code on the cubic and body-centered cubic lattices for phenomenological phase-flip noise.
Our results lead to new CA decoders with provable error-correction thresholds for other topological quantum codes including the color code.
\end{abstract}

\maketitle

\twocolumngrid

%%%%%%%%%%%%%%%%%%%%%%%%%%%%%%%%%%%%%%%%%%%%
%\section{Introduction}
%%%%%%%%%%%%%%%%%%%%%%%%%%%%%%%%%%%%%%%%%%%%

To fault-tolerantly operate a scalable universal quantum computer, one protects logical information using a quantum error-correcting code, and removes errors without disturbing the encoded information \cite{Shor1996, Gottesman1996, Preskill1998}.
This can be achieved with stabilizer codes.
Each stabilizer generator is measured, yielding an outcome $\pm 1$, and a classical decoding algorithm then computes the recovery operator.
Unfortunately, optimal decoding of generic stabilizer codes is computationally hard \cite{Hsieh2011,Iyer2015}.
Thus, to render this task tractable one should restrict attention to codes with some structure.

Topological stabilizer codes \cite{Kitaev2003, Bravyi1998, Bombin2006, Haah2011,Bombin2013book,Kubica2015}, such as the toric and color codes, have a lot of structure due to the geometric locality of their stabilizer generators.
Namely, any stabilizer returning a $-1$ measurement outcome indicates the presence of errors in its neighborhood.
By exploiting this syndrome pattern, many efficient decoders with high error-correction thresholds have been proposed \cite{Dennis2002,Wang2009,Wang2011,Duclos-Cianci2010,Bravyi2013a,Fowler2012a,Anwar2013, Delfosse2014,Delfosse2017a,Delfosse2017,Nickerson2017,Maskara2018}. 
However, most of these decoders use global classical information about the measurement outcomes and thus require communication between distant parts of the system.
In any realistic setting, new faults appear during the time needed to collect and process global syndrome data~\cite{Fowler2012, Terhal2015}.
Thus, to avoid error accumulation we desire fast decoders, which ideally use only local information.

A very promising class of topological quantum code decoders is based on cellular automata (CA) \cite{Harrington2004, Herold2015, Dauphinais2017}.
CA decoders are very efficient because they naturally incorporate parallelization and can be implemented on dedicated hardware without any non-local communication.
As initially suggested in Ref.~\cite{Dennis2002}, a simple CA, called Toom's rule \cite{Toom1980, Bennett1985,Grinstein2004}, can successfully protect quantum information encoded into the 4D toric code on a hypercubic lattice.
Moreover, recent numerical simulations \cite{Pastawski2011,Breuckmann2017,Breuckmann2017a} indicate that heuristic decoders based on Toom's rule have non-zero error-correction thresholds for higher-dimensional toric codes.

In this article  we address the fundamental question whether using a CA is a viable error-correction strategy for topological quantum codes.
First, we propose a new CA, the Sweep Rule, which is a generalization of Toom's rule to any locally Euclidean lattice in $d\geq 2$ dimensions.
The Sweep Rule shrinks $(k-1)$-dimensional domain walls for any $k = 2,\ldots, d$.
Then, we use the Sweep Rule to design a new local decoder of the toric code in $d\geq 3$ dimensions, the Sweep Decoder, and rigorously prove a lower bound on its performance for perfect syndrome extraction.
Finally, we numerically demonstrate successful error suppression using a noisy version of the Sweep Rule.
In particular, we estimate the sustainable threshold error rate ${p^{\textrm{bcc}}_{\textrm{sus}} = \psusval}$ of the Sweep Decoder for phase-flip errors and imperfect syndrome measurements in the 3D toric code on the body-centered cubic (bcc) lattice; see Fig.~\ref{fig_numerics}.
Our decoding scheme works reliably against Pauli $X$ or Pauli $Z$ errors if the corresponding syndrome is at least one-dimensional and the error rate is below the theshold value;  thus it can protect topological quantum memories in $d\geq 4$ dimensions.
Our results also lead to new CA decoders for the color code in $d\geq 3$ dimensions, presented in the accompanying article~\cite{cc_decoder}.

\begin{figure}[h!]
\centering
\includegraphics[width=.9\columnwidth]{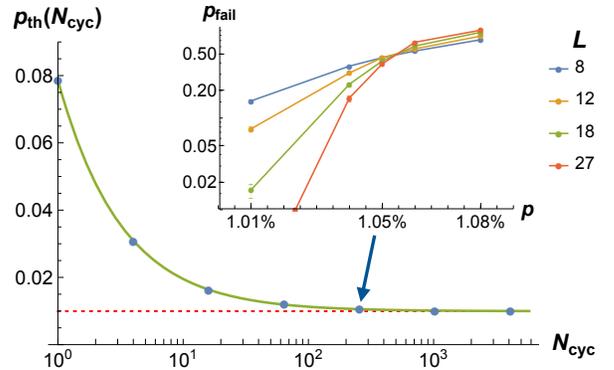}
\caption{
(Inset) The failure probability $p_{\textrm{fail}}(p,L)$ of the Sweep Decoder for the 3D toric code on the bcc lattice $\mathcal{L}$ after $N_{\textrm{cyc}} = 2^8$ correction cycles, where $p$ is the phase-flip error rate and $L$ is the linear size of $\mathcal{L}$.
We estimate the threshold $p_{\textrm{th}}(N_{\textrm{cyc}}) \approx 1.055\%$ from the crossing point of different curves.
(Main) We find the sustainable threshold $p^{\textrm{bcc}}_{\textrm{sus}} = \psusval $ by fitting the numerical ansatz from Eq.~(\ref{eq_ansatz_pheno}) to the data.
}
\label{fig_numerics}
\end{figure}

%%%%%%%%%%%%%%%%%%%%%%%%%%%%%%%%%%%%%%%%%%%%
\sectionprl{Limitations of Toom's rule}
%%%%%%%%%%%%%%%%%%%%%%%%%%%%%%%%%%%%%%%%%%%%

Consider the square lattice with a classical $\pm 1$ spin placed on every face and encode one bit of information by setting all spins to be either $+1$ or $-1$.
We want to protect the encoded bit against random spin flips, $\pm 1\mapsto \mp 1$.
This can be achieved with a CA, which flips certain spins based on \textit{locally} available information.
A simple example is the deterministic Toom's rule which sets the spin $s_C^{(T+1)}$ at time $T+1$ to 
\begin{equation}
s^{(T+1)}_C = \sgn\left(s^{(T)}_C + s^{(T)}_E + s^{(T)}_N\right),
\label{eq_toom}
\end{equation}
where $\sgn(\cdot)$ is the sign function, {$s^{(T)}_E$ and $s^{(T)}_N$ are the neighboring spins on faces to the east and north at time $T$; see Fig.~\ref{fig_Toom}(a)}
The update can be simultaneously applied to all the spins in the square lattice.

\begin{figure}[t!]
\includegraphics[width=.95\columnwidth]{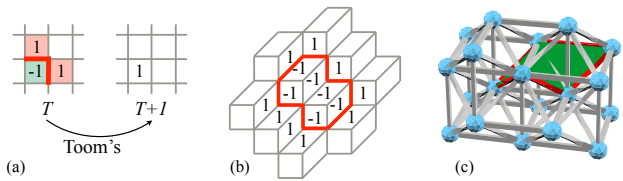}
\caption{
(a) 
At time $T$ the spin $s^{(T)}_C = -1$ (green face) differs from its neighbors to the east $s^{(T)}_E = 1$ and north $s^{(T)}_N = 1$ (red faces).{ According to Eq.~(\ref{eq_toom}), Toom's rule sets} $s^{(T+1)}_C = 1$.
(b) A 2D lattice built of three types of parallelograms with a domain wall (red), which cannot be removed by repeated application of a naive generalization of Toom's rule.
(c) The 3D toric code on the bcc lattice \cite{vzome} has qubits on faces and $X$-stabilizers associated with edges.
Any configuration of $Z$ errors (green) results in a 1D loop-like $X$-syndrome (red).
}
\label{fig_Toom} 
\end{figure}

We can rephrase Toom's rule as a conditional spin update determined by the local configuration of the 1D domain wall, i.e., the set of all edges of the lattice separating faces with spins of different value.
Let $\epsilon^{(T)}$ and $\sigma^{(T)}$ denote the set of faces with $-1$ spins and the corresponding domain wall at time $T=1,2,\ldots$.
We write $\sigma^{(T)} = \partial_2 \epsilon^{(T)}$ to capture the fact that $\sigma^{(T)}$ is the boundary of $\epsilon^{(T)}$ containing all the edges bounding faces in $\epsilon^{(T)}$.
Then, Toom's rule flips a spin on some face $f$, i.e., $s^{(T+1)}_f = - s^{(T)}_f$, iff the east and north edges of $f$ belong to $\sigma^{(T)}$; see Fig.~\ref{fig_Toom}(a).
If we know $\sigma^{(T)}$ and the set of all spins flipped between time  $T$ and $T+1$, which we denote by $\rho^{(T)}$, then the domain wall at time $T+1$ is 
\begin{equation}
\sigma^{(T+1)}  = \sigma^{(T)} + \partial_2 \rho^{(T)}.
\end{equation}
Note that this update does not require the knowledge of the actual spin values but only the locations of flipped spins, and from that perspective it may be viewed as a local rule governing the dynamics of the domain wall.
Moreover, if the domain wall disappears by time $T$,
i.e., $\sigma^{(T)} = 0$, then $\rho = \sum_{i=1}^{T-1} \rho^{(i)}$ can be viewed as an estimate
\footnote{This strategy, however, is neither guaranteed to terminate nor to return $\rho = \epsilon^{(1)}$.}
of $\epsilon^{(1)}$ with the boundary $\partial_2 \rho$ matching the initial domain wall $\sigma^{(1)}$.
As we will see later, correcting errors in the toric code in $d \geq 3$ dimensions can also be rephrased as estimating $\epsilon^{(1)}$ given its boundary $\sigma^{(1)}$, by exploiting the domain-wall structure of the  syndrome.

This version of  Toom's rule works for the square lattice, but it is not obvious how to generalize it to other 2D lattices, or to higher dimensions.
To illustrate the difficulty, consider the 2D lattice  in Fig.~\ref{fig_Toom}(b).
If one uses a simple update rule \textit{``flip a spin iff east and north edges of the face belong to the domain wall''}, then there exist spin configurations with domain walls which cannot be removed by repeated application of this rule.
For such error syndromes, the Toom's rule decoder fails to correct the erroneous spins.
To define a workable version of Toom's rule, the lattice must have suitable properties, which we now specify.

%%%%%%%%%%%%%%%%%%%%%%%%%%%%%%%%%%%%%%%%%%%%
\sectionprl{Causal lattices}
%%%%%%%%%%%%%%%%%%%%%%%%%%%%%%%%%%%%%%%%%%%%

We consider a lattice $\mathcal{L}$, which is a triangulation (possibly without any symmetries) of the Euclidean space $\mathbb{R}^2$.
We denote by $\face i {\mathcal{L}}$ the set of all $i$-simplices of $\mathcal{L}$.
In particular, $\face 0 {\mathcal{L}}$, $\face 1 {\mathcal{L}}$ and $\face 2 {\mathcal{L}}$ correspond to vertices, edges and triangular faces of $\mathcal{L}$.
We assume that each $\face i {\mathcal{L}}$ contains countably many elements and define the sweep direction as a unit vector $\vec t \in \mathbb{R}^2$ not perpendicular to any edge of $\mathcal{L}$.

We define a path $\pth u w$ between two vertices $u$ and $w$ of the lattice $\mathcal{L}$ to be a collection of edges
{$(u,v_1),\ldots,(v_{n},w) \in\face{1}{\mathcal{L}}$, where $v_i\in\face{0}{\mathcal{L}}$.}
If the sign of the inner product $\vec{t}\cdot (v_i,v_{i+1})$ is the same for all edges in the path $\pth u w$, then we call the path causal and denote it by $\cpth u w$.
We remark that any pair of the vertices of $\mathcal{L}$ is connected by a path but there might not exist a causal path between them; see Fig.~\ref{fig_causal_structure}(a).
Finally, we define the causal distance
\begin{equation}
\label{eq_cdist}
\cdist{u,w} = \min_{\cpth u w} |\cpth u w |
\end{equation}
to be the length of the shortest causal path between $u$ and $w$; if there is no causal path, then $\cdist{u,w} = \infty$.

We observe that the sweep direction $\vec t$ induces a binary relation $\preceq$ over the set of vertices $\face{0}{\mathcal{L}}$.
We say that $u$ precedes $w$, i.e., $u\preceq w$ for $u,v\in \face 0 {\mathcal{L}}$, iff there exists a causal path $\cpth u w$ and  $\vec{t} \cdot (v_i,v_{i+1}) > 0 $ for any edge $(v_i,v_{i+1})\in \cpth u w$.
Equivalently, we write $w\succeq u$ and say that $w$ succeeds $u$.
Abusing the notation, we write $v \preceq \kappa$ if all vertices $\face 0 \kappa$ of a $k$-simplex $\kappa\in\face k {\mathcal{L}}$ succeed $v$, i.e., $v\preceq u$ for all $u\in\face 0 \kappa$; similarly for $\kappa \preceq v$.

\begin{figure}[t!]
\centering
\includegraphics[width = .95\columnwidth]{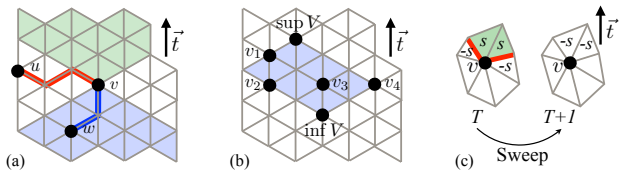}
\caption{
(a) {Vertices} $u$ and $v$ are connected by a path $\pth u v$ (red), but there is no causal path between them; $v$ and $w$ are connected by a causal path $\cpth u v $ (blue).
We shaded in green and blue the future $\uset v$ and past $\dset v$ of $v$.
(b) The causal diamond $\cdia V$ (blue) of a subset of vertices $V = \{ v_1,v_2,v_3,v_4\}$ is defined as the intersection of the future of the infimum of $V$ with the past of the supremum of $V$.
(c) The Sweep Rule is defined for every vertex and locally updates $\pm 1$ spins on neighboring faces.
Since the vertex $v$ is trailing, spins on two green faces will be flipped.
}
\label{fig_causal_structure} 
\end{figure}

We can think of the partial order $\preceq$ between vertices of the lattice as a causality relation between points in the discretized $(1+1)$D spacetime with $\vec t$ corresponding to the time
\footnote{We warn the reader that later we use the time $T$ to index how many times the CA rule is applied.}
direction; see Fig.~\ref{fig_causal_structure}(a)(b).
We define the future $\uset v$ and past $\dset v$ of a vertex $v\in\face{0}{\mathcal{L}}$ as the collection of all simplices of $\mathcal{L}$ succeeding and preceding $v$, namely
\begin{eqnarray}
\uset v &=& \{ \kappa\in\face{k}{\mathcal{L}} | \forall k\textrm{ and } v\preceq \kappa \},\\
\dset v &=& \{ \kappa\in\face{k}{\mathcal{L}} | \forall k\textrm{ and } v\succeq \kappa \}.
\end{eqnarray}
Every finite subset of vertices $V\subseteq\face 0 {\mathcal{L}}$ has a unique supremum, the vertex $\sup V$, where $\sup V$ lies in the future of each $u \in V$, and furthermore $\sup V$ lies in the past of each vertex $w$ which is in the future of each $v\in V$.
The infinum $\inf V$ is defined analogously.
Lastly, we define the causal diamond $\cdia V$ as the intersection of the future of $\inf V$ and the past of $\sup V$, i.e.,
\begin{equation}
\cdia V = \uset{\inf V} {\cap} \dset{\sup V}.
\end{equation}

This discussion of causal structure generalizes to lattices embedded in a torus; however, one has to excercise caution since the partial order is well-defined only within local regions.
Also, in case of higher-dimensional lattices we make certain assumptions about their causal structure, such as the existence of unique infimum and supremum of $V$.
To avoid technical details, we simply refer to lattices satisfying those assumptions as causal and defer the discussion to Appendix~\ref{app_lattices}.

%%%%%%%%%%%%%%%%%%%%%%%%%%%%%%%%%%%%%%%%%%%%
\sectionprl{Sweep Rule}
%%%%%%%%%%%%%%%%%%%%%%%%%%%%%%%%%%%%%%%%%%%%

Let $\mathcal{L}$ be a 2D causal lattice with $\pm 1$ spins on triangular faces and $\epsilon\subseteq\face{2}{\mathcal{L}}$ denote the set of all faces with $-1$ spins.
The corresponding domain wall $\sigma$ can be found as the boundary $\partial_2 \epsilon$.
Let $v$ be a vertex of $\mathcal{L}$ and denote by $\sigma \rest v$ the restriction of the domain wall $\sigma$ to the edges incident to $v$.
We say that $v$ is trailing if $\sigma \rest v$ is non-empty and belongs to the future of $v$, namely $\sigma\rest v \subset\  \uset v$; see Fig.~\ref{fig_Sweep_illustrated}.
We propose a new local spin update rule defined for every vertex $v$ of $\mathcal{L}$.
\begin{definition}[Sweep Rule]
\label{def_sweep2}
If a vertex $v$ is trailing, then find a subset of neighboring faces $\phi(v)$ in the future $\uset v$ with boundary locally matching the domain wall, i.e., $(\partial_2 \phi(v))\rest v = \sigma\rest v$, and flip spins on faces in $\phi(v)$.
\end{definition}
\noindent This Rule is deterministic and there is a unique choice of $\phi(v)$.
The spin update results in the domain wall being locally pushed away from any trailing vertex $v$; see Fig.~\ref{fig_Sweep_illustrated}.
Note that nothing happens if a vertex is not trailing.
We can, however, consider a very similar CA, the Greedy Sweep Rule, which always tries to push the domain wall away from $v$ in the sweep direction $\vec t$, irrespective of $v$ being trailing; see Appendix~\ref{app_sweep}.

In Lemma~\ref{lemma_properties} we present properties of the Sweep Rule (proven in Appendix~\ref{app_properties}) needed to establish a non-zero threshold of the Sweep Decoder.

 \begin{figure}[t!]
\centering
\includegraphics[width=.95\columnwidth]{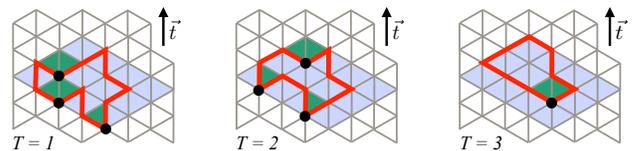}
\caption{
{For each trailing vertex $v$ (black) at time $T=1,2,3$ the Sweep Rule finds} a subset $\phi(v)$ of neighbouring faces (green) in the future $\uset v$, whose boundary $\partial_2 \phi(v)$ locally matches the domain wall $\sigma^{(T)}$ (red), i.e., $(\partial_2 \phi(v))\rest v = \sigma^{(T)}\rest v$.
Flipping spins in $\phi(v)$ pushes $\sigma^{(T)}$ away from $v$ in the sweep direction $\vec t$.
Note that $\phi(v)$ and $\sigma^{(T)}$ are always in the causal diamond $\cdia{\sigma^{(1)}}$ (blue) of the initial domain wall $\sigma^{(1)}$.
}
\label{fig_Sweep_illustrated}
\end{figure}

\begin{lemma}[Sweep Rule Properties]
\label{lemma_properties}
Let $\sigma$ be a domain wall in the causal lattice $\mathcal{L}$.
If the Sweep Rule is simultaneously applied to every vertex of $\mathcal{L}$ at time steps $T=1,2,\ldots$, then
\begin{enumerate}
\item (Support) the domain wall $\sigma^{(T)}$ at time $T$ stays within the causal diamond $\cdia \sigma$, i.e., 
\begin{equation}
\sigma^{(T)} \subset \cdia \sigma,
\label{eq_support}
\end{equation}
\item (Propagation) the causal distance between $\sigma$ and any vertex $v$ of $\sigma^{(T)}$ is at most $T$, i.e., 
\begin{equation}
\cdist{v,\sigma}  \leq T,
\end{equation}
\item (Removal) 
the domain wall is removed by time $T^*$, i.e., $\sigma^{(T)} = 0$ for all $T> T^*$, where
\begin{equation}
T^* = \max_{\cpth{\inf\sigma}{\sup\sigma}}  |\cpth{\inf\sigma}{\sup\sigma}|.
\label{eq_removal}
\end{equation}
\end{enumerate}
\end{lemma}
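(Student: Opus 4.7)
The plan is to prove the three properties by induction on $T$, treating Support as the core invariant and letting Propagation follow directly while Removal needs a separate potential-decrease argument. The base case $T=1$ is trivial for all three. For the inductive step of Support, I would use the structural identity $\sigma^{(T+1)} = \sigma^{(T)} + \sum_v \partial_2 \phi(v)$ summed over trailing vertices $v$ of $\sigma^{(T)}$. By the inductive hypothesis each such $v$ satisfies $\inf \sigma \preceq v \preceq \sup \sigma$, and since $\phi(v) \subseteq \uset v$, every vertex of $\phi(v)$ lies in the future of $v$, hence in $\uset{\inf \sigma}$. The more delicate half is to show $\phi(v) \subseteq \dset{\sup \sigma}$; intuitively, $\phi(v)$ is the unique local completion of $\sigma\rest v$ in $\uset v$, and the edges of $\sigma\rest v$ already lie in $\cdia \sigma$ by the inductive hypothesis, so the faces of $\phi(v)$ are ``wedged'' between $v$ and those edges and cannot extend past $\sup \sigma$. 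I would expect this to follow from the causal-lattice assumptions collected in Appendix~\ref{app_lattices}.

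Propagation follows by a one-step causal-distance bound: any new vertex $u$ in $\sigma^{(T+1)}$ lies on an edge in $\partial_2 \phi(v)$ for some trailing $v$, so $\cdist{v, u} \leq 1$; concatenating with a shortest causal path from $\sigma$ to $v$ (length $\leq T$ by induction) gives the bound $T+1$. For Removal, I would introduce the potential $\Phi(\sigma^{(T)}) = \max_{u \in \face{0}{\sigma^{(T)}}} \cdist{u, \sup \sigma}$, which is finite by Support and bounded initially by $T^*$. The crux is a monotonicity claim: any vertex attaining $\Phi(\sigma^{(T)})$ must be trailing, since every incident edge of $\sigma^{(T)}$ at such a vertex necessarily runs in the forward causal direction (otherwise one would find a vertex of $\sigma^{(T)}$ farther from $\sup \sigma$); the Sweep Rule then cancels these edges and introduces new ones incident to vertices strictly closer to $\sup \sigma$, so $\Phi$ drops by at least one per step. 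Once $\Phi = 0$, every vertex of $\sigma^{(T)}$ equals $\sup \sigma$, and since $\sigma^{(T)}$ remains a $1$-cycle throughout (each update $\partial_2 \phi(v)$ is itself a boundary), a cycle supported on a single vertex must vanish.

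The main obstacle will be the two geometric claims resting on the causal-lattice axioms: that $\phi(v) \subseteq \dset{\sup \sigma}$ in the Support induction, and that vertices maximizing $\Phi$ are trailing in the Removal argument. Both are intuitive ``no pushing past the front'' statements, but each requires a careful unpacking of $\inf$, $\sup$, and the local geometry of $\uset v$ within an arbitrary causal triangulation. I would lean heavily on the appendix's structural assumptions \emph{ --- } in particular the uniqueness of $\sup$ and the existence of causal paths realizing $T^*$ \emph{ --- } to make these arguments go through cleanly.
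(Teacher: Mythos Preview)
Your overall inductive structure matches the paper's, and your Propagation argument is essentially identical to theirs. There is, however, a genuine gap in your Removal argument.

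The potential you propose, $\Phi(\sigma^{(T)}) = \max_{u\in\face{0}{\sigma^{(T)}}} \cdist{u,\sup\sigma}$, measures the \emph{shortest} causal path to $\sup\sigma$, and this does not support either of the two claims you need. First, a vertex $v$ maximizing $\Phi$ need not be trailing: if $(w,v)\in\sigma^{(T)}$ with $w\prec v$, you would like to conclude $\cdist{w,\sup\sigma} > \cdist{v,\sup\sigma}$, but in an irregular triangulation $w$ may have a short ``shortcut'' edge toward $\sup\sigma$ that $v$ lacks, so $\cdist{w,\sup\sigma}$ can be $\leq \cdist{v,\sup\sigma}$. Second, even when $v$ is trailing, the new vertices $u\succ v$ appearing in $\partial_2\phi(v)$ need not satisfy $\cdist{u,\sup\sigma} < \cdist{v,\sup\sigma}$; the triangle inequality only gives $\cdist{u,\sup\sigma} \geq \cdist{v,\sup\sigma} - 1$, which is the wrong direction. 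The paper fixes both issues by using instead the \emph{longest} causal path,
\[
f_\sigma(T) = \max_{v\in\face{0}{\sigma^{(T)}}} \ \max_{\cpth{v}{\sup\sigma}} |\cpth{v}{\sup\sigma}|.
\]
With this choice, prepending the edge $(w,v)$ to any causal path from $v$ to $\sup\sigma$ immediately gives a strictly longer path from $w$, so a maximizer must be trailing; and for any new vertex $u\succ v$ one prepends $(v,u)$ to any path from $u$ to get a longer path from $v$, forcing the potential to drop.

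For Support, your ``wedging'' intuition is correct but the paper gets it more directly: the Sweep Rule (and the causal-lattice assumption~(ii) in Appendix~\ref{app_lattices}) explicitly requires $\cdia{\phi(v)} = \cdia{\sigma\rest v}$, from which $\cdia{\phi(v)}\subseteq\cdia{\sigma^{(T)}}\subseteq\cdia{\sigma}$ is immediate. You should invoke this condition rather than trying to rederive it geometrically.
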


The Sweep Rule may also be defined for vertices of a $d$-dimensional causal lattice $\mathcal{L}$ with spins placed on $k$-simplices $\face k {\mathcal{L}}$, where $k =2, \ldots, d$.
However, for $k \neq d$ the local choice of spins to flip $\phi(v)$ may not be unique (this does not happen in 2D).
Thus, we consider a family of rules corresponding to different ways of choosing $\phi(v)$ in such a way that, roughly speaking, the local causal structure of the domain wall is preserved after flipping spins on $k$-simplices in $\phi(v)$; see Appendix~\ref{app_sweep}.

%%%%%%%%%%%%%%%%%%%%%%%%%%%%%%%%%%%%%%%%%%%%
\sectionprl{Sweep Decoder}
%%%%%%%%%%%%%%%%%%%%%%%%%%%%%%%%%%%%%%%%%%%%

We may use the $d$-dimensional version of the Sweep Rule to decode the toric code on the $d$-dimensional causal lattice $\mathcal{L}$.
Recall that the toric code of type $k=1,\ldots,d-1$ {is} defined by placing qubits on $k$-simplices of $\mathcal{L}$, and associating $X$- and $Z$-stabilizers with $(k-1)$- and $(k+1)$-simplices.
Then, $Z$-stabilizers, $Z$-logical operators and $X$-syndromes correspond to, respectively, the elements of $\im\partial_{k+1}$, $\ker\partial_k$ and $\im\partial_k$, where $\partial_i$ denotes the $i$-boundary operator; see Appendix~\ref{app_lattices}.
If $\epsilon\subseteq\face k {\mathcal{L}}$ is the set of qubits affected by  $Z$ errors, then the corresponding $X$-syndrome is $\sigma = \partial_k \epsilon$.
Thus, for $k \geq 2$, decoding of $Z$ errors can be phrased as the already discussed problem of estimating locations of $-1$ spins given the corresponding domain wall.
{Note} that for $k \leq d-2$ decoding of $X$ errors is analogous but in the dual lattice $\mathcal{L}^*$ with the $Z$-syndrome forming a $(d-k-1)$-dimensional domain wall.

\begin{algorithm}%[H]
\caption{Sweep Decoder}
\SetKwInOut{Require}{Require}
\KwIn{$X$-syndrome $\sigma\in\textrm{im }\partial_{k}$, {$k=2,\ldots,d-1$}}
\KwOut{$k$-dimensional correction $\rho\subseteq\face{k}{\mathcal{L}}$}
initialize $T$ = 1, $\sigma^{(1)} = \sigma$\\
unless $T> \tmax $ or $\sigma^{(T)} = 0$ repeat:\\
\begin{enumerate}
\item apply the Sweep Rule simultaneously to every\\ vertex of $\mathcal{L}$ to get $\rho^{(T)}$
\item find $\sigma^{(T+1)} = \sigma^{(T)} + \partial_k\rho^{(T)}$
\item update time step $T\leftarrow T+1$\\
\end{enumerate}
if $T\leq \tmax$
\footnote{$\tmax$ is of the order of the linear size $L$ of the lattice $\mathcal{L}$; see Appendix~\ref{app_proof_together}.}, then $\rho = \sum_{i=1}^{T-1}\rho^{(i)}$, otherwise $\rho = \texttt{FAIL}$\\
\KwRet{$\rho$}
\end{algorithm}

This \textit{Sweep Decoder} may fail for either one of two reasons.
First, it might not terminate within time $\tmax$,
which results in $\rho = \texttt{FAIL}$.
Second, the correction $\rho$ combined with the initial error $\epsilon$ may implement a non-trivial logical operator, i.e., $\rho + \epsilon \not\in \im \partial_{k+1}$.
{However}, the Sweep Decoder a has non-zero error-correction threshold --- if the $Z$ error rate is below threshold, then the failure probability rapidly approaches zero as the code's block grows.
We establish this fact by deriving a lower bound $\pthr > 0$ on the threshold error rate (which explicitly depends on the local structure of $\mathcal{L}$).
\begin{theorem}[Threshold]
\label{thm_thres}
Consider a family of causal lattices $\mathcal{L}$ of growing linear size $L$ on the $d$-dimensional torus, and define the toric code of type $k=2,\ldots,d-1$ on $\mathcal{L}$.
Then, there exists a constant $\pthr > 0$, such that for any phase-flip error rate $p < \pthr$ the failure probability of the Sweep Decoder for perfect syndrome extraction goes to zero as $L\rightarrow\infty$.
\end{theorem}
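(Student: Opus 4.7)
The plan is to prove Theorem~\ref{thm_thres} by a Peierls-type cluster argument that is essentially standard for topological codes, but adapted to the causal geometry exploited by the Sweep Rule. The strategy is to show that if the initial error $\epsilon\subseteq\face{k}{\mathcal{L}}$ only has ``small'' connected clusters, then the Sweep Decoder terminates well within $\tmax$ and its output satisfies $\rho+\epsilon \in \im\partial_{k+1}$, so no logical error occurs; then to bound the probability of a ``large'' cluster by a union bound over possible cluster shapes.

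First I would decompose $\epsilon = \sum_i \epsilon_i$ into clusters by grouping $k$-simplices whose causal diamonds lie within some fixed causal distance of one another. The Support and Propagation properties of Lemma~\ref{lemma_properties} then imply that, for each $i$, the evolving local syndrome $\sigma_i^{(T)}$ and partial correction $\rho_i^{(T)}$ remain inside an $O(D_i)$-enlargement of $\cdia{\sigma_i}$, where $D_i = \diam{\epsilon_i}$. Choosing the cluster separation to exceed twice this enlargement guarantees that the Sweep Rule acts independently on different clusters, so the total correction splits cleanly as $\rho = \sum_i \rho_i$ and it suffices to analyze each cluster in isolation.

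Next I would show that a cluster of diameter $D_i$ is harmless whenever $D_i < cL$, for some constant $c$ depending only on the local geometry of $\mathcal{L}$. By Removal the local syndrome is eliminated in at most $O(D_i)$ time steps, so taking $\tmax = \Theta(L)$ ensures termination and hence $\partial_k(\rho_i + \epsilon_i) = 0$. Since $\rho_i + \epsilon_i$ is supported in a region of Euclidean diameter less than $L$ on the torus, this cycle is contractible and therefore lies in $\im\partial_{k+1}$, i.e., it is a product of $Z$-stabilisers and causes no logical error. The Peierls step then bounds the probability of a bad (large) cluster: a connected cluster containing a fixed simplex and comprising $n$ simplices admits at most $\mu^n$ shapes for a lattice-dependent $\mu$, and occurs with probability at most $p^{n/\alpha}$ where $\alpha$ is a geometric constant. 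Summing over a starting simplex and over $n \geq cL/\beta$ bounds the failure probability by
\begin{equation}
|\face{k}{\mathcal{L}}|\sum_{n \geq cL/\beta} \left(\mu\, p^{1/\alpha}\right)^n,
\end{equation}
which goes to zero as $L\to\infty$ provided $p < \pthr := \mu^{-\alpha}$.

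The step I expect to be the main obstacle is the geometric control in the middle: translating the causal-distance bounds of Lemma~\ref{lemma_properties} into Euclidean-diameter bounds sharp enough to invoke contractibility on the torus, while simultaneously accommodating the non-uniqueness of $\phi(v)$ in the $d$-dimensional Sweep Rule for $k \neq d$ (see Appendix~\ref{app_sweep}). The causal structure of $\mathcal{L}$ (Appendix~\ref{app_lattices}) and the structural choice of $\phi$ must interact properly so that separated clusters remain separated throughout the dynamics, and this combined geometric bookkeeping is what ultimately determines the explicit form and numerical value of $\pthr$.
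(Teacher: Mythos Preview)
Your proposal has a genuine gap in the cluster-independence step, and it is precisely the step you did \emph{not} flag as the main obstacle. You propose to decompose $\epsilon$ into clusters at a single fixed separation scale $\ell$ and then ``choose the cluster separation to exceed twice this enlargement'', where the enlargement is $O(D_i)$ with $D_i=\diam{\epsilon_i}$. But $\ell$ is fixed while $D_i$ is not: an $\ell$-connected cluster of $n$ errors can have diameter up to $(n-1)\ell$, so the guaranteed separation ($>\ell$) between distinct clusters can be arbitrarily small compared to their diameters. The Support and Removal bounds in Lemma~\ref{lemma_properties} scale with the diameter of the syndrome, not with $\ell$; hence two medium-size clusters separated by only slightly more than $\ell$ will have overlapping causal diamonds and the Sweep Rule will \emph{not} act on them independently. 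Your ``harmless if $D_i<cL$'' conclusion therefore does not follow, and neither does the factorisation $\rho=\sum_i\rho_i$. Attempting to patch this by declaring ``some $\ell$-cluster has diameter $>\ell/(2c_D)$'' to be the bad event does not help either: that event already occurs with probability $\Theta(L^d p^2\ell^d)$, which does not vanish as $L\to\infty$ for any fixed $\ell$ and $p$.

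This is exactly why the paper does \emph{not} use a single-scale Peierls argument. It uses a hierarchical (renormalisation-group) chunk decomposition $\epsilon=F_0\sqcup F_1\sqcup\cdots$ in the style of G\'acs--Harrington--Bravyi--Haah: a level-$n$ chunk is a disjoint union of two level-$(n{-}1)$ chunks of diameter $\le Q^n/2$, and the key (Connected Components) Lemma shows that every $Q^i$-connected component $M$ of $F_i$ satisfies simultaneously $\diam M\le Q^i$ and $d(M,E_i\setminus M)>Q^{i+1}/3$. With $Q=6c_Dc_P$ this gives, at every scale, separation at least twice the diameter, which is what your argument needs but cannot obtain at a single scale. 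The probability bound is then obtained not by counting lattice animals but via the van den Berg--Kesten inequality, yielding doubly-exponential suppression $\pr{\text{level-}n\text{ chunk}}\le(p/\pthr)^{2^n}$. The geometric bookkeeping you anticipated (translating causal to Euclidean diameters via the constants $c_B,c_D,c_P$ of Appendix~\ref{app_lattices}) is indeed present in the paper's proof, but it is routine once the multi-scale decomposition is in place; the non-uniqueness of $\phi(v)$ plays no role beyond the standing causal-lattice assumptions.
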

\noindent In Appendix~\ref{app_proof} we present a rigorous proof of Theorem~\ref{thm_thres} based on renormalization group ideas \cite{Gacs1988, Harrington2004, Bravyi2013a}; here we  only outline the proof strategy.

\begin{proof}
First, we decompose each error configuration into recursively defined ``connected components,'' where a ``level-$n$'' connected component has a linear size growing exponentially with $n$.
The probability of a level-$n$ connected component is doubly-exponentially small in $p / \pthr$.
The connected components are well isolated from other errors; therefore, using  Lemma~\ref{lemma_properties} and some modest assumptions about the lattice family, we can show that a connected component with linear size small compared to $L$ will be successfully removed by repeated application of the Sweep Rule.
Therefore, the Sweep Decoder fails only if the contains a level-$n$ connected component with size comparable to $L$, which is very improbable for large $L$ and $p < \pthr$.
\end{proof}

%%%%%%%%%%%%%%%%%%%%%%%%%%%%%%%%%%%%%%%%%%%%
\sectionprl{Numerical simulations}
%%%%%%%%%%%%%%%%%%%%%%%%%%%%%%%%%%%%%%%%%%%%

In Theorem~\ref{thm_thres} we assumed that the Sweep Rule is applied flawlessly, but in a realistic scenario the Rule itself is noisy; the noise degrades the effectiveness of error correction and reduces the threshold. 
We have numerically investigated the performance of the Sweep Decoder for the 3D toric code on the bcc lattice with qubits on faces. 
We consider a phenomenological noise model such that in each error correction cycle Pauli $Z$ errors on qubits occur with probability $p$, and in addition measured syndrome bits are flipped with probability $p$.
Using Monte Carlo simulations we find the threshold $p_{\textrm{th}}(N_{\textrm{cyc}})$ for a fixed number $N_{\textrm{cyc}}$ of noisy correction cycles followed by perfect syndrome extraction and full decoding.
Note that $p_{\textrm{th}}(1)$ is the threshold for perfect syndrome extraction.
We are, however, interested in the so-called sustainable threshold $p^{\textrm{bcc}}_{\textrm{sus}} = \lim_{N_{\textrm{cyc}}\rightarrow\infty} p_{\textrm{th}}(N_{\textrm{cyc}})$ \cite{Brown2015, Terhal2015}.
We observe that the threshold $p_{\textrm{th}}(N_{\textrm{cyc}})$ is very well approximated by the numerical ansatz
\begin{equation}
p_{\textrm{th}}(N_{\textrm{cyc}}) \sim p^{\textrm{bcc}}_{\textrm{sus}} (1 - (1- p_{\textrm{th}}(1)/p^{\textrm{bcc}}_{\textrm{sus}}) N_{\textrm{cyc}}^{-\gamma}),
\label{eq_ansatz_pheno}
\end{equation}
with the fitting parameters ${p^{\textrm{bcc}}_{\textrm{sus}} = \psusval}$ and ${\gamma = \gammaval}$; see~Fig.~\ref{fig_numerics}.
These numerical results were actually obtained for a variant of the Sweep Decoder based on the Greedy Sweep Rule, which has a higher threshold than the decoder based on the Sweep Rule.
In Appendix~\ref{app_sweep} we discuss the Greedy Sweep Rule, explain how it generalizes to locally Euclidean lattices, and use it to estimate the sustainable threshold of the 3D toric code on the cubic lattice $p^{\textrm{cubic}}_{\textrm{sus}} = \psusvaltoom $.

%%%%%%%%%%%%%%%%%%%%%%%%%%%%%%%%%%%%%%%%%%%%
\sectionprl{Discussion}
%%%%%%%%%%%%%%%%%%%%%%%%%%%%%%%%%%%%%%%%%%%%

We have presented a new CA, the Sweep Rule, which generalizes Toom's rule to any locally Euclidean $d$-dimensional lattice.
This Rule can be used to decode a topological quantum code whose error syndrome is at least one dimensional, including the color code; see the accompanying article~\cite{cc_decoder}.
We proved that a decoder based on the Sweep Rule has a non-zero accuracy threshold for the toric code, and we numerically studied its performance against a phenomenological noise model.

Our results provide a rigorous justification for using CA error-correction strategies for topological quantum codes.
We hope that our proof techniques will lead to new CA decoders with provable thresholds for codes on lattices with boundaries, hyperbolic lattices or other quantum low-density parity-check codes.

The Sweep Rule may also be of independent interest for defining statistical-mechanical problems inspired by quantum information \cite{Kubica2014,Yoshida2014,Kubica2017}.
As for Toom's rule, one can consider a non-deterministic variant of the Sweep Rule and study the evolution of spins generated by this probabilistic CA.
We conjecture that the resulting spin dynamics is non-ergodic and that the phase diagram contains regions with multiple coexisting stable phases, as established in 2D by Toom~\cite{Toom1980}.

%%%%%%%%%%%%%%%%%%%%%%%%%%%%%%%%%%%%%%%%%%%%
\sectionprl{Acknowledgements}
%%%%%%%%%%%%%%%%%%%%%%%%%%%%%%%%%%%%%%%%%%%%

A.K. thanks Nicolas Delfosse for invaluable feedback throughout the project, and Ben Brown and Mike Vasmer for stimulating discussions.
A.K. acknowledges funding provided by the Simons Foundation through the ``It from Qubit'' Collaboration.
Research at Perimeter Institute is supported by the Government of Canada through Industry Canada and by the Province of Ontario through the Ministry of Research and Innovation.
J.P. acknowledges support from ARO, DOE, IARPA, NSF, and the Simons Foundation. The Institute for Quantum Information and Matter (IQIM) is an NSF Physics Frontiers Center.

\appendix
\onecolumngrid

%%%%%%%%%%%%%%%%%%%%%%%%%%%%%%%%%%%%%%%%%%%%
\section{Causal lattices}
%%%%%%%%%%%%%%%%%%%%%%%%%%%%%%%%%%%%%%%%%%%%
\label{app_lattices}

Let us revisit some ideas and definitions related to lattices, which we did not elaborate on in the main text but will be necessary in the proofs of the (Sweep Rule Properties) Lemma~\ref{lemma_properties} and the (Threshold) Theorem~\ref{thm_thres}.
A $d$-dimensional lattice $\mathcal{L}$ can be constructed by attaching $d$-dimensional cells to one another along their $(d-1)$-dimensional faces; see \cite{Glaser1972, Hatcher2002}.
We are particularly interested in cases when the lattice $\mathcal{L}$ is built of simplices, i.e., for any $k$ all of the $k$-cells of $\mathcal{L}$ are just $k$-simplices.
We denote by $\face k {\mathcal{L}}$ the set of all $k$-simplices of the lattice $\mathcal{L}$, where $k=0,1,\ldots, d$.

In order to describe the Sweep Rule in $d \geq 3$ dimensions, we need to discuss $k$-dimensional domain walls, where $k = 1, \ldots, d-1$.
First, we define $C_k$ to be an $\mathbb{F}_2$-vector space with the set $\face k {\mathcal{L}}$ as a basis.
Note that there is a one-to-one mapping between vectors in $C_k$ and subsets of $\face k {\mathcal{L}}$.
Then, we introduce a $(k-1)$-boundary operator $\partial_k:C_k \rightarrow C_{k-1}$ as a linear map specified for every basis element $\kappa\in\face k {\mathcal{L}}$ by
\begin{equation}
\partial_k \kappa= \sum_{\nu\in\face {k-1} \kappa} \nu,
\end{equation}
where $\face {k-1} \kappa$ is the set of all $(k-1)$-simplices contained in $\kappa$.
Let us place a $\pm 1$ spin on every $k$-simplex of $\mathcal{L}$ and denote by $\epsilon \subset \face k {\mathcal{L}}$ the locations of $-1$ spins.
The corresponding domain wall $\sigma$ can be found as the $(k-1)$-boundary of $\epsilon$, i.e., $\sigma = \partial_k \epsilon$. 
Note that by definition $(k-1)$-dimensional domain walls and elements of $\im\partial_k$ are equivalent.

In our proofs, we use two notions of distance.
We have already introduced the causal distance as the length of the shortest causal path; see Eq.~(\ref{eq_cdist}).
The other quantity, the distance $d(u,v)$ between two vertices $u$ and $v$, is defined to be the length of the shortest path connecting $u$ and $v$ in the lattice $\mathcal{L}$, namely
\begin{equation}
d(u,v) = \min_{\pth u v} |\pth u v|.
\end{equation}
We note that there is always a path between $u$ and $v$, but a causal path might not exist.
Moreover, the following inequality between the distance and the causal distance holds:
\begin{equation}
d(u,v) \leq \cdist{u,v}.
\end{equation}
We define the distance $d(U,V)$ between two subsets of vertices $U$ and $V$ as the minimal distance between any two vertices of $U$ and $V$, namely
\begin{equation}
d(U,V) = \min_{u\in U, v\in V} d(u,v).
\end{equation}
We also introduce the diameter of a subset of vertices $V$ as the maximal distance between any two vertices of $V$, i.e.,
\begin{equation}
\diam V = \max_{u,v \in V} d(u,v).
\end{equation}
Finally, we remark that the above definitions (as well as the partial order $\preceq$ between vertices of $\mathcal{L}$ induced by the sweep direction $\vec t$) are unambiguous if $\mathcal{L}$ is a discretization of the Euclidean space $\mathbb{R}^d$ and $\vec t \in \mathbb{R}^d$ is chosen not to be perpendicular to any edge of $\mathcal{L}$.
However, in the case when $\mathcal{L}$ is defined on the $d$-dimensional torus, one has to exercise caution.
For instance, the partial order can be consistently defined only within a local region of $\mathcal{L}$ with diameter at most some fraction of the linear size of $\mathcal{L}$.

To succinctly describe the Sweep Rule, we introduce a couple of notions capturing the local structure of the lattice $\mathcal{L}$.
Let $\kappa\in \face k {\mathcal{L}}$ be a $k$-simplex.
We denote by $\face l \kappa$ the set of all $l$-simplices contained in $\kappa$, where $l\leq k$.
Also, we denote by $\star n \kappa$ the set of all $n$-simplices in the neighborhood of $\kappa$ which contain $\kappa$ (this is also known as the $n$-star of $\kappa$, where $n\geq k$).
Lastly, we define a discrete $d$-dimensional ball $B_v(r)$ of radius $r$ centered at the vertex $v$ to be a collection of all $k$-simplices for any $k=0,\ldots,d$, whose distance from $v$ is less than $r$, namely
\begin{equation}
B_v(r) =\{ \kappa\in\face{k}{\mathcal{L}} | \forall k \textrm{ and } d(v,\kappa) < r \}.
\label{eq_ball}
\end{equation}
Note that a unit ball $B_v(1)$ corresponds to the collection of all simplices containing $v$, i.e., $B_v(1) = \bigsqcup_{k=0}^d \star k v$.
Also, if $\sigma$ is some collection of simplices, then the restriction of $\sigma$ to the neighborhood of $v$ is defined as $\sigma\rest v = \sigma \cap B_v(1)$.

Now we discuss necessary assumptions on the $d$-dimensional lattices to unambiguously define the Sweep Rule and prove a non-zero threshold of the Sweep Decoder.
We say that a family of lattices $\mathcal{L}$ of growing linear size $L$ is causal if it satisfies the following properties.
\begin{itemize}
\item Causal structure:
\begin{itemize}
\item[(i)] for any subset of vertices $V\subset \face 0 {\mathcal{L}}$ within a local region of $\mathcal{L}$ there exists a unique causal diamond $\cdia V$,
\item[(ii)] for any $v\in\face 0 {\mathcal{L}}$ and $\sigma \in \im \partial_k$ if $\sigma\rest v \subset\ \uset v$, there exists $\varphi(v) \subseteq \star{k} v \cap \uset v$ satisfying $(\partial_{k}\varphi(v))\rest v = \sigma\rest v$ and $\cdia{\varphi(v)} = \cdia{\sigma\rest v}$.
\end{itemize}
\item Locally Euclidean:
\begin{itemize}
\item[(iii)] for any ball $B_v(R)$ of radius $R$ within a local region of $\mathcal{L}$ one finds a cover 
\begin{equation}
\bigcup_{u\in U} B_u(r) \supset B_v(R)
\end{equation}
with balls of radius $r< R$ indexed by $U\subset \face 0 {\mathcal{L}}$, such that
\begin{equation}
|U| \leq c_B (R/r)^d
\label{eq_balls}
\end{equation}
with $d$ being the dimension of the lattice $\mathcal{L}$ and $c_B$ is a constant,
\item[(iv)] for any subset of vertices $V\subset \face 0 {\mathcal{L}}$ within a local region of $\mathcal{L}$ the diameters of $V$ and the causal diamond $\cdia V$ are comparable, i.e., there exists a constant $c_D$ such that 
\begin{equation}
\diam{\cdia V} \leq c_D \cdot \diam V,
\label{eq_diameters}
\end{equation}
\item[(v)] for any pair of vertices $u\preceq v$ the distance between them and the maximal length of any causal path between them are comparable, i.e., there exists a constant $c_P$ such that 
\begin{equation}
\max_{\cpth{u}{v}} |\cpth{u}{v}| \leq c_P \cdot \dist u v .
\label{eq_paths}
\end{equation}
\end{itemize}
\end{itemize}
Note that we require that the constants $c_B$, $c_D$ and $c_P$ do not depend on $L$.
Also, $c_D \geq 1$ since $\cdia V \supset V$.
Moreover, $c_P \geq d$ since by choosing $u = \inf \delta$ and $v = \sup \delta$ for any $d$-simplex $\delta$ we get $\dist u v  = 1$ and $\max_{\cpth u v } |\cpth u v | \geq d$.
We remark that condition (i) states that the partially ordered set $(\face 0 {\mathcal{L}},\preceq)$ is a (locally complete) lattice in the sense of order theory~\cite{Davey2002}. However, we refrain from using this term in order to avoid confusion with the lattice $\mathcal{L}$ defined as a collection of simplices.

We emphasize that the properties regarding the causal structure are sufficient if one wants to define the Sweep Rule on $\mathcal{L}$. 
Additionally, one requires the property of being locally Euclidean in order to prove that the Sweep Decoder has non-zero threshold for the toric code of type $k = 2,\ldots,d-1$ defined on $\mathcal{L}$.
Note that hyperbolic lattices do not satisfy the property of being locally Euclidean, and thus we cannot readily establish lower-bounds on the performance of the Sweep Decoder in that setting.

The aforementioned assumptions can be easily checked for translationally-invariant lattices, such as the 3D bcc lattice used to study the threshold of the Sweep Decoder for the toric code.
Indeed, let us identify the set of vertices of the 3D bcc lattice with the elements in $(2\mathbb{Z})^3 \cup (2\mathbb{Z}+1)^3$ and choose the sweep direction to be $\vec{t} = (1,1,1)\in\mathbb{R}^3$.
Then, one can explicitly find a unique infimum and supremum for any pair of vertices $u, v\in\face 0 {\mathcal{L}}$.
By induction, one can prove the uniqueness of the infimum and supremum for any finite subset of vertices, which in turn implies condition (i).
One can verify that condition (ii) is satisfied by exhaustively checking it for every possible choice of $\sigma\rest v$.
Since the bcc lattice is locally Euclidean, conditions (iii)--(v) hold straightforwardly.
We remark that it may be challenging to verify conditions (i) and (ii) for less regular lattices, for instance one may have to independently check condition (ii) for every vertex of $\mathcal{L}$.
However, we conjecture that these conditions are satisfied by any lattice built of simplices which is a discretization of the Euclidean space $\mathbb{R}^d$.

%%%%%%%%%%%%%%%%%%%%%%%%%%%%%%%%%%%%%%%%%%%%
\section{Greedy Sweep Rule}
%%%%%%%%%%%%%%%%%%%%%%%%%%%%%%%%%%%%%%%%%%%%
\label{app_sweep}

We can readily generalize the Sweep Rule in Definition~\ref{def_sweep2} to be applicable to any $d$-dimensional causal lattice $\mathcal{L}$, as long as $\mathcal{L}$ satisfies the assumptions discussed in Appendix~\ref{app_lattices}.
We remark that in $d$ dimensions there are $d-1$ different types of the Sweep Rule.
Namely, if we place a $\pm 1$ spin on every $k$-simplex of $\mathcal{L}$, where $k = 2,\ldots,d$, then the corresponding Sweep Rule of type $k$ governs the dynamics of the $(k-1)$-dimensional domain wall $\sigma\in \im\partial_{k}$ separating spins of different values.

As in the two-dimensional case, we first define a vertex $v$ to be trailing iff the restriction $\sigma|_v = \sigma \cap B_v(1)$ of the domain wall $\sigma$ to the neighborhood $B_v(1)$ of $v$ is non-empty and is contained in the future of $v$, i.e., $\sigma|_v \subseteq \star {k-1} v \cap \uset v$.
Then, for every trailing vertex $v$ the Sweep Rule finds a set of neighboring $k$-simplices $\varphi(v) \subseteq \star{k}{v} \cap \uset v$ in the future of $v$ satisfying two conditions:
\begin{itemize}
\item[(i)] the boundary of $\phi(v)$ locally matches the domain wall, i.e., $(\partial_{k} \phi(v))\rest v = \sigma \rest v$,
\item[(ii)] the causal diamonds of $\phi(v)$ and $\sigma\rest v$ match, i.e., $\cdia{\partial_{k} \phi(v)} = \cdia{\sigma \rest v}$,
\end{itemize}
then the Rule flips spins in $\phi(v)$.
We remark that the choice of $\phi(v)$ may not be unique unless $k=d-1$.
As a result, the domain wall is locally pushed away from the trailing vertex $v$ in the sweep direction $\vec t$.

We already mentioned in the main text that one could consider a CA without the condition on vertices to be trailing.
The resulting local spin update, the Greedy Sweep Rule, can be succinctly formulated as follows.
\begin{definition}[Greedy Sweep Rule in $d$ dimensions]
\label{def_greedy}
If a vertex $v$ belongs to the domain wall $\sigma$, then find a subset $\phi (v) \subseteq \star {k} v \cap \uset v$ of neighboring $k$-simplices in the future $\uset v$ satisfying the following three conditions
\begin{itemize}
\item[(i)] the boundary of $\phi(v)$ is locally contained in the domain wall, i.e., $(\partial_{k} \phi(v))\rest v \subseteq \sigma \rest v$,
\item[(ii)] the causal diamond of $\phi(v)$ is in the causal diamond of the domain wall restriction, i.e., $\cdia {\phi(v)} \subseteq \cdia{\sigma|_v}$,
\item[(ii)] the size of $\sigma\rest v + (\partial_k \phi)\rest v$ is minimal,
\end{itemize}
and flip spins on the $k$-simplices in $\phi(v)$.
\end{definition}
We emphasize that in Definition~\ref{def_greedy}  we try to capture a family of local rules corresponding to different possible ways of choosing $\varphi(v)$.
Also, the Sweep Rule in Definition~\ref{def_sweep2} is a special case of the Greedy Sweep Rule.
Namely, for trailing vertices the action of both of them turns out to be identical.
Moreover, their properties (as stated in Lemma~\ref{lemma_properties}) are the same and one can prove non-zero threshold for a decoder based on the Greedy Sweep Rule as well.

The Greedy Sweep Rule is local, since in order to apply it to some vertex $v$, we only require the knowledge of the restriction of the domain wall $\sigma|_v$ and the set of $k$-simplices $\star{k}{v}$ in the neighborhood of $v$.
The cardinality $|\star{k}{v} \cap \uset v|$ depends on the local structure of the lattice $\mathcal{L}$, but we are interested in cases when it is upper-bounded by some constant.
Thus, finding a subset $\varphi(v)$ can be done in constant time by checking all possible subsets of $\star{k}{v} \cap \uset v$ and finding the one satisfying conditions (i)--(iii).
In some special cases, e.g. in two dimensions, one can find $\varphi(v)$ more efficiently than via the exhaustive search; see~\cite{Kubicathesis}.
Note that for any trailing vertex $v$ and the domain wall $\sigma$ we can always find $\varphi(v)$ satisfying $(\partial_{k} \phi(v))\rest v = \sigma \rest v$ and $\cdia {\phi(v)} = \cdia{\sigma|_v}$ (this follows immediately from the assumption about the local structure of the lattice).

\begin{figure}[t!]
\centering
(a)\includegraphics[width=.4\columnwidth]{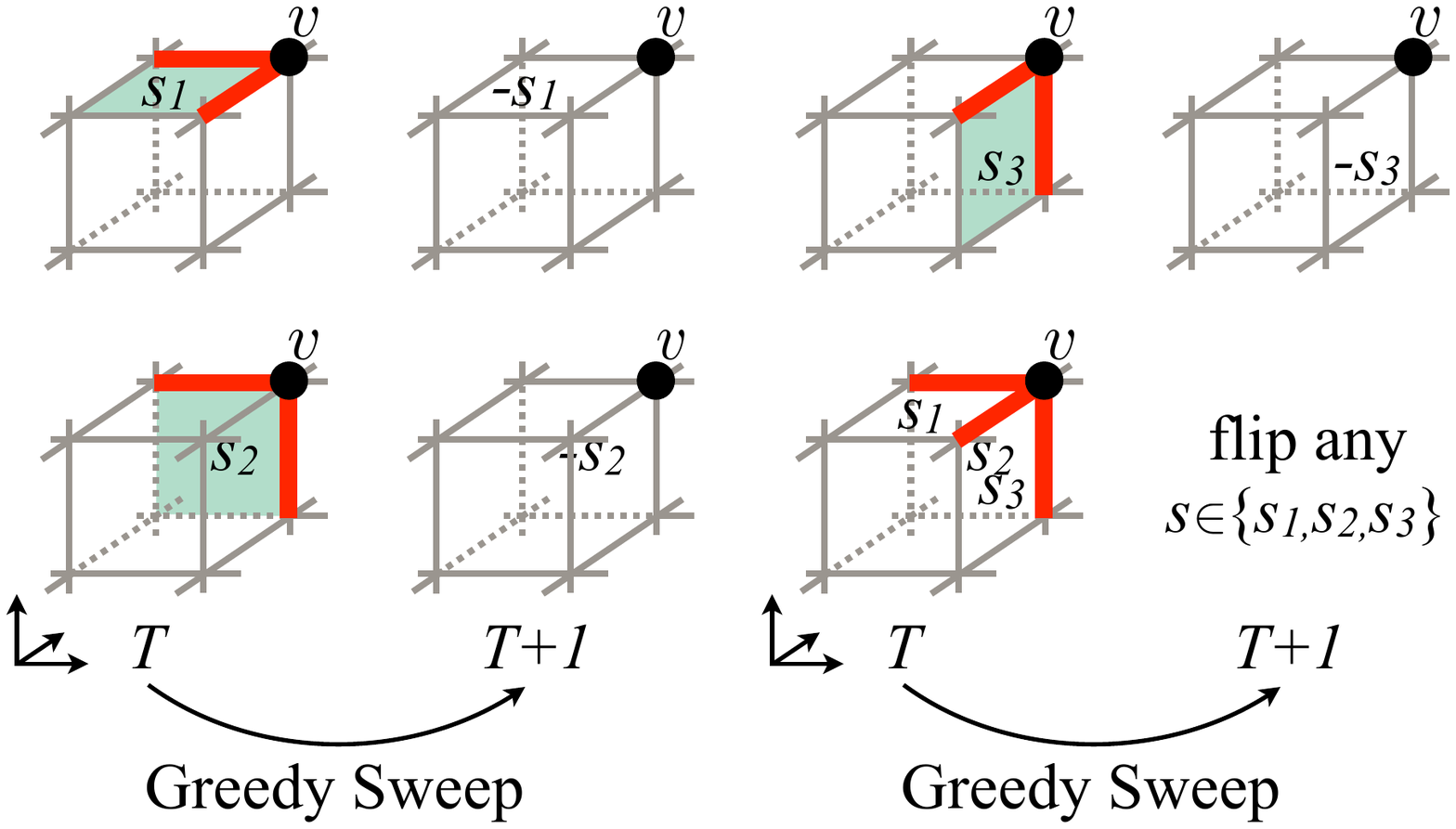}
\hfill(b)\includegraphics[width=.47\columnwidth]{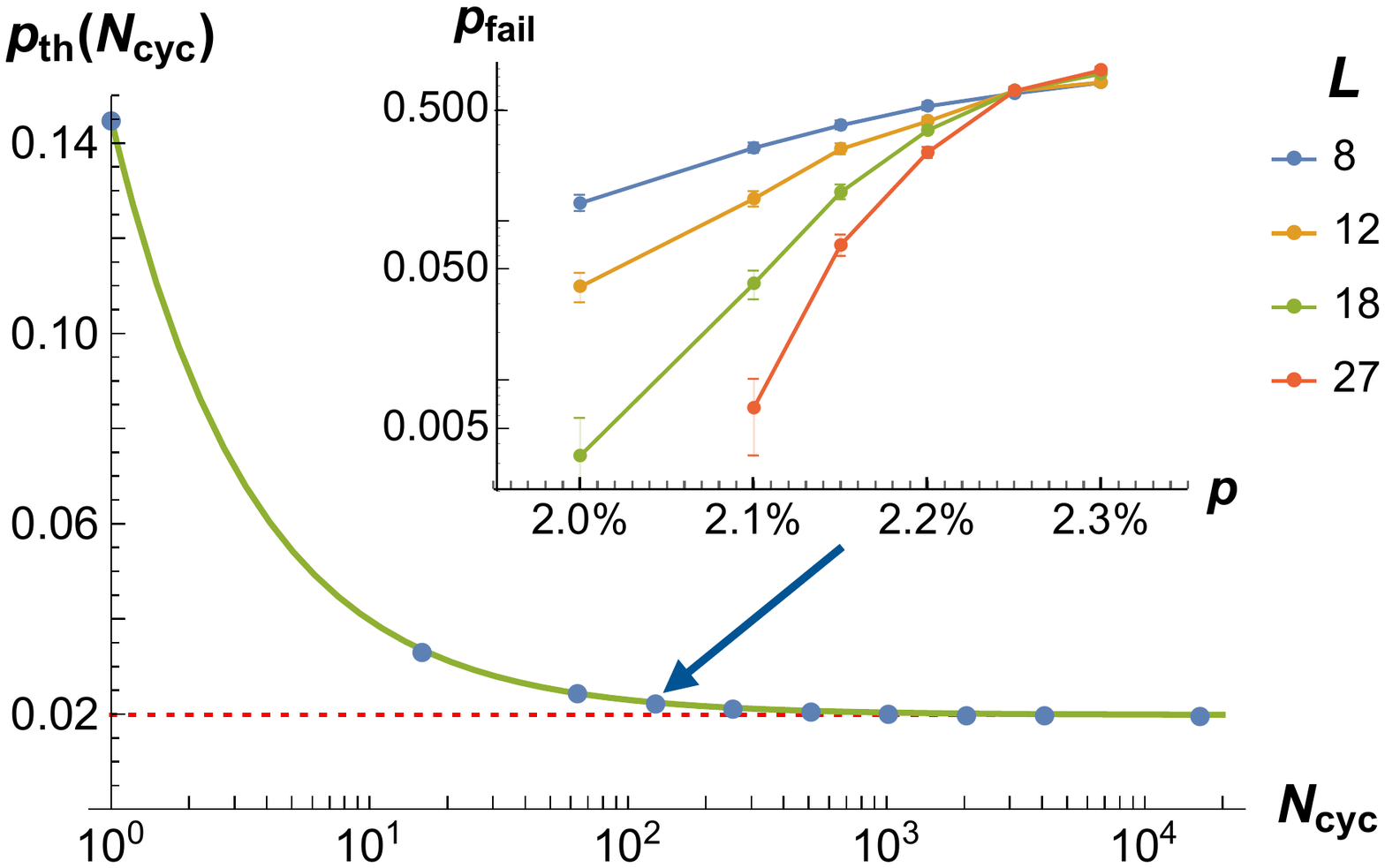}
\caption{
(a) A version of the Greedy Sweep Rule for the cubic lattice $\mathcal{L}$ with the sweep direction $\vec t = -(1,1,1)\in\mathbb{R}^3$ and classical $\pm 1$ spins on faces can be viewed as a higher-dimensional generalization of Toom's rule.
The local spin update rule is defined for every vertex $v$ of $\mathcal{L}$ and depends on the restriction $\sigma\rest v$ of the domain wall (red) to the edges incident to $v$, which are also in the future $\uset v$.
(b) (Main) The performance of a decoder based on the Greedy Sweep Rule for the 3D toric code on the cubic lattice $\mathcal{L}$ with qubits on faces and the phase-flip noise model.
Note that $p_{\textrm{th}}(1)$ corresponds to the threshold for perfect syndrome extraction.
As described in the main text, to estimate the sustainable threshold we fit the numerical ansatz from Eq.~(\ref{eq_ansatz_pheno}) (green line) to the data and find $p^{\textrm{cubic}}_{\textrm{sus}} = \psusvaltoom $ and $\gamma = \gammavaltoom$.
(Inset) The failure probability $p_{\textrm{fail}}(p,L)$ of the decoder after $N_{\textrm{cyc}} = 2^7$ correction cycles, where $p$ is the error rate and $L$ is the linear size of $\mathcal{L}$.
We consider a phenomenological noise model such that in each error correction cycle $Z$ errors on qubits occur with probability $p$, and in addition measured syndrome bits are flipped with probability $p$.
We estimate the threshold $p_{\textrm{th}}(N_{\textrm{cyc}}) \approx 2.25\%$ from the crossing point of different curves.
}
\label{fig_Toom3D}
\end{figure}

Lastly, we remark that one can construct variants of the Greedy Sweep Rule applicable to other $d$-dimensional locally Euclidean lattices not necessarily built of $d$-simplices but satisfing the causal structure properties (i) and (ii) from Appendix~\ref{app_lattices}.
Examples of such lattices include the square and cubic lattices, as well as the 2D parallelogram lattice in Fig.~\ref{fig_Toom}(b).
From that viewpoint, the Greedy Sweep Rule is a generalization of Toom's rule.
In particular, the Greedy Sweep Rule with the sweep direction $\vec{t} = -(1,1)\in\mathbb{R}^2$ would be identical to Toom's rule on the square lattice, and would not encounter any persistent domain wall configurations for the 2D parallelogram lattice.
In the case of the cubic lattice with the sweep direction $\vec{t} = -(1,1,1) \in \mathbb{R}^3$ and spins placed on faces, the Greedy Sweep Rule can be viewed as a higher-dimensional version of Toom's rule~\cite{Barabasi1992}; see Fig.~\ref{fig_Toom3D}(a) for an illustration of the local rules to update spins.
We numerical estimate the sustainable threshold $p^{\textrm{cubic}}_{\textrm{sus}} = \psusvaltoom $ of the decoder based on the Greedy Sweep Rule for the 3D toric code on the cubic lattice with qubits on faces and the phase-flip noise model; see Fig.~\ref{fig_Toom3D}(b).
One might speculate that a more general variant of the Greedy Sweep Rule can be defined for systems with causal structure that are not geometrically local.

%%%%%%%%%%%%%%%%%%%%%%%%%%%%%%%%%%%%%%%%%%%%
\section{Proof of properties of the Sweep Rule}
%%%%%%%%%%%%%%%%%%%%%%%%%%%%%%%%%%%%%%%%%%%%
\label{app_properties}

Before we present the proof of Lemma~\ref{lemma_properties}, we remark that the Support Property can be strengthen to read
\begin{equation}
\sigma^{(T)} \subset \left( \bigcup_{v \in \face 0 {\sigma}} \uset v \right) \cap \dset{\sup \sigma}.
\end{equation}
Also, we can make the bound in the Removal Property tighter, namely
\begin{equation}
T \geq \max_{v\in\face{0}{\sigma}} \max_{\cpth{v}{\sup \sigma}}  | \cpth{v}{\sup \sigma}|.
\end{equation}
However, in the proof of non-zero threshold it suffices to use weaker conditions, as stated in the main text in~Eqs.~(\ref{eq_support})~and~(\ref{eq_removal}), which are simpler to parse.

\begin{proof}

We prove the properties of the Sweep Rule by induction.
For $T=1$ all of them trivially hold.
In the rest of the proof, we use the following simple fact about causal diamonds:
\begin{itemize}
\item  for any finite $U,W \subset \face 0 {\mathcal{L}}$ if $U\subseteq W$, then $U\subseteq \cdia U \subseteq \cdia W$.
\end{itemize}

Now we show the induction step for the Support Property.
Let $V^{(T-1)}\subset \face 0 {\sigma^{(T-1)}}$ denote the set of trailing vertices of the domain wall $\sigma^{(T-1)}$ at time step $T-1$.
Note that in between time steps $T-1$ and $T$ the Sweep Rule finds for every trailing vertex $v\in V^{(T-1)}$ a certain subset $\phi^{(T-1)}(v)$ of neighboring $(k+1)$-simplices, which locally matches $\sigma^{(T-1)}$, i.e., $(\phi^{(T-1)}(v))\rest v = \sigma^{(T-1)}\rest v$.
Then, by flipping spins on $\phi^{(T-1)}(v)$ the domain wall is locally modified and it becomes
\begin{equation}
\sigma^{(T)} = \sigma^{(T-1)} + \sum_{v \in V^{(T-1)}} \phi^{(T-1)}(v).
\end{equation}
Note that $\phi^{(T-1)}(v)$ is chosen in such a way that $\cdia{\phi^{(T-1)}(v)} = \cdia{\sigma^{(T-1)}\rest v}$, and thus 
$\cdia{\phi^{(T-1)}(v)} \subseteq \cdia{\sigma^{(T-1)}} \subseteq \cdia \sigma$.
We conclude that 
\begin{equation}
\cdia{\sigma^{(T)}} \subseteq \cdia{\cdia{\sigma^{(T-1)}}\cup\bigcup_{v\in V^{(T-1)}} \cdia{\phi^{(T-1)}(v)}} \subseteq \cdia \sigma.
\end{equation}

It is straightforward to prove the Propagation Property.
Namely, every vertex $v$ in the domain wall $\sigma^{(T)}$ either belongs to $\sigma^{(T-1)}$ or is connected to some vertex $u\in\face 0 {\sigma^{(T-1)}}$ via an edge $(u,v) \in \face 1 {\mathcal{L}}$, such that $(u,v)\cdot \vec t > 0$.
Note that the latter case can arise when we locally modify $\sigma^{(T-1)}$ by flipping $(k+1)$-simplices around its trailing vertex $u$.
Thus, by using the induction hypothesis and triangle inequality we arrive at
\begin{equation}
\cdist{v,\sigma} \leq \cdist{v,u} + \cdist{u,\sigma} \leq T,
\end{equation}
where we set $u=v$ if $v\in\face 0 {\sigma^{(T-1)}}$.

To show the Time Property we {define} the integer-valued function
\begin{equation}
f_\sigma(T) = \max_{v\in\face 0 {\sigma^{(T)}}} \max_{\cpth{v}{\sup \sigma}} |\cpth{v}{\sup \sigma}|,
\end{equation}
which is the length of the longest causal path between the supremum of $\sigma$ and any vertex $v$ in the domain wall $\sigma^{(T)}$;
{ if $\sigma^{(T)} = 0$, then we set $f_\sigma(T) = 0$.}
{We argue that the function $f_\sigma(T)$ is a monotone of the Sweep Rule, namely it monotonically decreases with $T$ until the domain wall $\sigma^{(T)}$ is removed.}
First, note that if $v$ is a vertex of $\sigma^{(T)}$ which maximizes the function $f_\sigma(T)$, then it has to be trailing.
Thus, in between time steps $T$ and $T+1$ the Sweep Rule modifies the domain wall in the neighborhood of $v$.
In particular, $v$ is not included in $\sigma^{(T+1)}$, however some new vertices from the neighborhood of $v$, which are necessarily closer (in the sense of the longest causal path) to $\sup \sigma$ may be included.
Thus, we get $f_\sigma(T+1) < f_\sigma(T)$, as desired.

We observe that the Time Property follows immediately from the monotone $f_\sigma(T)$.
Namely, the initial value $f_\sigma(1)$ is upper-bounded by $\max_{\cpth{\inf\sigma}{\sup\sigma}}  |\cpth{\inf\sigma}{\sup\sigma}|$.
As long as the domain wall $\sigma^{(T)} \neq 0$, the monotone $f_\sigma(T)$ is decreased by at least one at each time step.
{Thus, for all $T > \max_{\cpth{\inf\sigma}{\sup\sigma}}  |\cpth{\inf\sigma}{\sup\sigma}|$ we necessarily have $f_\sigma(T) = 0$ and the domain wall is guaranteed to disappear, proving the Time Property. }
\end{proof}

%%%%%%%%%%%%%%%%%%%%%%%%%%%%%%%%%%%%%%%%%%%%
\section{Proof of threshold}
%%%%%%%%%%%%%%%%%%%%%%%%%%%%%%%%%%%%%%%%%%%%
\label{app_proof}

Now we are ready to prove the (Threshold) Theorem~\ref{thm_thres}.
Our proof is inspired by previous works \cite{Gacs1988, Harrington2004, Bravyi2013a} and consists of three parts.
In the first part, we discuss how to decompose the error configuration into recursively defined ``chunks,'' which naturally leads to the notion of a disjoint decomposition of the error configuration into connected components. 
Then, we explicitly find a positive constant $\pthr$ such that for phase-flip noise with rate $p < \pthr$ the probability of observing a level-$n$ chunk is doubly-exponentially suppressed in $n$.
Our notation for the chunk decomposition and the arguments about suppression of high-level chunks closely follow Ref.~\cite{Bravyi2013a}.
Finally, using the Sweep Rule Properties from Lemma~\ref{lemma_properties}, as well as the assumptions (i)--(v) on the family of considered lattices from Appendix~\ref{app_lattices} we show that the Sweep Decoder successfully corrects all connected components up to some level, which in turn allows us to upper-bound the decoding failure probability.

\subsection{Chunk decomposition and connected components}

Let $\epsilon\subseteq\face{k}{\mathcal{L}}$ be an error configuration in the $d$-dimensional toric code of type $k=2,\ldots,d-1$, i.e., the set of $k$-simplices identified with qubits affected by Pauli $Z$ errors.
We define a level-0 chunk $E^{[0]}$ to be an element of $\epsilon$.
In other words, a level-0 chunk corresponds to a single location of error.
We recursively define a level-$n$ chunk $E^{[n]} = E_1^{[n-1]} \sqcup E_2^{[n-1]}$ to be a disjoint union of two level-$(n-1)$ chunks $E_1^{[n-1]}$ and $E_2^{[n-1]}$, such that $\diam{E^{[n]}} \leq Q^n/2$
for some constant $Q$.
{Note that by a disjoint union $A\sqcup B$ we mean a union of two sets $A$ and $B$ which are disjoint, i.e., $A\cap B = \emptyset$.}
We define the level-$n$ error $E_n \subseteq \epsilon$ to be a union of all level-$n$ chunks
\begin{equation}
E_n = \bigcup_i E^{[n]}_i.
\end{equation}
Note that by definition $\epsilon = E_0$.
Also, we have the following sequence of inclusions
\begin{equation}
\epsilon = E_0 \supseteq E_1 \supseteq \ldots \supseteq E_m \supsetneq E_{m+1} = \emptyset,
\label{eq_error_inclusion}
\end{equation}
which allows us to define $F_i = E_i \setminus E_{i+1}$ for $i=0,1\ldots,m$.
Note that for any finite $\epsilon$ there exists a finite $m$ satisfying Eq.~(\ref{eq_error_inclusion}).
Lastly, we arrive at the following disjoint decomposition of the error configuration
\begin{equation}
\epsilon = F_0 \sqcup F_1 \sqcup \ldots \sqcup F_m.
\end{equation}

We say that a subset of errors $M\subseteq \epsilon$ is an $l$-connected component if it cannot be split into two disjoint non-empty sets $M_1$ and $M_2$ separated by more than $l$.
In other words, for any $M_1, M_2 \neq \emptyset$ if $M = M_1 \sqcup M_2$, then $d(M_1,M_2) \leq l$.
One can show that, roughly speaking, the diameter of any connected component is not too big and different connected components are far from each other.
This important observation is captured by the following Lemma~\ref{lemma_connected_comp}, whose proof we include for completeness.

\begin{lemma}[Connected Components~\cite{Bravyi2013a}]
\label{lemma_connected_comp}
Let $Q \geq 6$ be some constant and a subset of errors $M\subseteq \epsilon$ be a $Q^i$-connected component of $F_i$.
Then, $\diam{M} \leq Q^i$ and $d(M,E_i\setminus M) > Q^{i+1}/3$.
\end{lemma}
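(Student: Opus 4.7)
The plan is to induct on $i$. The base case $i=0$ is immediate: every $e \in F_0$ sits at distance greater than $Q/2$ from every other error --- otherwise $\{e,e'\}$ would be a level-$1$ chunk placing $e \in E_1$ --- so each $Q^0 = 1$-connected component of $F_0$ is a singleton, and both conclusions hold with ample slack since $Q/2 > Q/3$ for $Q \geq 6$.

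For $i \geq 1$, I would exploit what I will call the \emph{non-merging principle}: for any $e \in F_i$ sitting in a level-$i$ chunk $C$, every level-$i$ chunk $C'$ that is \emph{disjoint} from $C$ must satisfy
\[
d(C,C') \;>\; \tfrac{Q^{i+1}}{2} - Q^i \;=\; \left(\tfrac{Q}{2}-1\right)Q^i,
\]
since otherwise $\diam{C\cup C'} \leq Q^i + d(C,C') \leq Q^{i+1}/2$ and $C \sqcup C'$ would be a valid level-$(i+1)$ chunk containing $e$, contradicting $e \notin E_{i+1}$. For $Q \geq 6$ this forces a gap of at least $2Q^i$, which in particular exceeds $Q^{i+1}/3$. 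The diameter bound then follows by observing that any two $e_1,e_2 \in F_i$ with $d(e_1,e_2) \leq Q^i$ must lie in overlapping level-$i$ chunks (disjoint ones would be separated by more than $2Q^i$, contradicting $d(e_1,e_2) \leq Q^i$), and iterating this overlap property along the $Q^i$-connectivity chain of $M$ together with triangle inequality should confine $M$ to diameter at most $Q^i$. The separation bound is then the easier half: for any $e' \in E_i \setminus M$, pick a level-$i$ chunk $C' \ni e'$ and a reference chunk $C_0$ covering part of $M$; using the maximality of $M$ as a $Q^i$-connected component one can take $C'$ disjoint from $C_0$, and non-merging delivers $d(M,e') \geq d(C_0, C') > (Q/2-1)Q^i \geq Q^{i+1}/3$.

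The main obstacle is the diameter step: chunk overlap is strictly weaker than chunk identity, so a chain of pairwise-overlapping level-$i$ chunks could a priori drift out of any bounded region. I would close this either by invoking the recursive decomposition of each level-$i$ chunk into two level-$(i-1)$ sub-chunks and applying the inductive hypothesis at level $i-1$ to the shared sub-chunks, or by a direct contradiction: if two elements of $M$ were at distance exceeding $Q^i$, some choice of containing chunks would be disjoint and violate non-merging. Making this rigorous in the face of the ambiguity in chunk selection --- and verifying that the constant $Q \geq 6$ leaves just enough slack in both bounds --- is the technical crux of the argument.
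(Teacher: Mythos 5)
Your non-merging principle is essentially a rephrasing of the paper's key step, which is a \emph{dichotomy}: for any $a \in \face 0 {F_i}$ and $b \in \face 0 {E_i}$, either $d(a,b) \leq Q^i$ or $d(a,b) > Q^{i+1}/3$. The paper proves this by contradiction---if $Q^i < d(a,b) \leq Q^{i+1}/3$, the level-$i$ chunks containing $a$ and $b$ are disjoint and their union has diameter at most $Q^i + Q^{i+1}/3 \leq Q^{i+1}/2$, so it is a level-$(i+1)$ chunk, forcing $a \in E_{i+1}$---which is exactly your non-merging observation. But the paper does \emph{not} induct on $i$: the dichotomy is proved directly at each level and both conclusions of the lemma are extracted from it, so the induction scaffolding you set up is unnecessary.

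You have correctly identified the diameter step as the crux, but neither of your proposed fixes closes the gap. Fix (b) in particular does not yield a contradiction: if $d(e_1,e_2) > Q^i$ then the containing chunks are disjoint and non-merging gives $d(e_1,e_2) > (Q/2-1)Q^i$, which only \emph{strengthens} the hypothesis rather than contradicting $Q^i$-connectedness of $M$. The missing ingredient is to turn the dichotomy into an explicit split of $M$: set $M_1 = \{x\in M : d(e_1,x)\leq Q^i\}$ and $M_2 = M\setminus M_1 \ni e_2$; for any $x\in M_1$, $y\in M_2$, the dichotomy gives $d(e_1,y) > Q^{i+1}/3 \geq 2Q^i$ (using $Q\geq 6$), so $d(x,y) \geq d(e_1,y) - d(e_1,x) > Q^i$, i.e., $d(M_1,M_2)>Q^i$, contradicting the definition of a $Q^i$-connected component. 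Equivalently, for $Q\geq 6$ the dichotomy plus the triangle inequality make the relation ``$d(\cdot,\cdot)\leq Q^i$'' transitive on $F_i$, so $M$ is a clique of diameter at most $Q^i$. Your separation step has a further flaw: $M$ is generally not contained in a single level-$i$ chunk $C_0$, so the bound $d(M,e')\geq d(C_0,C')$ does not hold, and the claim that ``one can take $C'$ disjoint from $C_0$'' is not justified. The paper avoids both issues by bounding $d(a,b)$ pointwise for $a\in M$, $b\in E_i\setminus M$, using the dichotomy together with maximality of $M$.
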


\begin{proof}
We prove the lemma by contradiction.
{For brevity, we say that a vertex $v$ is in some chunk $E$ if $v$ is a vertex of some $k$-simplex contained in $E$, i.e., $v\in \face 0 E$.}
Let us pick any $a\in \face{0}{F_i}$ and assume that there exists $b\in \face{0}{E_i }$, such that $Q^i < d(a,b) \leq Q^{i+1}/3$.
Then, $a$ and $b$ cannot be in the same level-$n$ chunk.
Hence $a$ and $b$ {are in} two different level-$n$ chunks $A$ and $B$, which are necessarily disjoint.
Using triangle inequality and $Q\geq 6$ we get
\begin{equation}
\diam{A\sqcup B} \leq \diam{A} + d(A,B) +\diam{B} \leq Q^i/2 + Q^{i+1}/3 + Q^i/2 \leq Q^{i+1}/2.
\end{equation}
This implies that $A\sqcup B$ is a level-$(i+1)$ chunk, and therefore $a$ is in $A\sqcup B \subseteq E_{i+1}$, which is in contradiction with {$a$ being in $F_i = E_{i}\setminus E_{i+1}$}.
We thus conclude that for any $b\in \face 0 {E_i}$ we either have $d(a,b)< Q^i$ or $d(a,b) > Q^{i+1}/3$.
The former case leads us to a conclusion that any $Q^i$-connected component $M\subseteq F_i$ has diameter at most $Q^i$.
The latter case allows us to argue that the distance between $M$ and $E_i\setminus M$ is more than $Q^{i+1}/3$.
\end{proof}

We remark that the (Connected Components) Lemma~\ref{lemma_connected_comp} will be used to show that the Sweep Decoder removes different connected components independently of one another since they are sufficiently far apart.

\subsection{Suppression of high-level chunks}

Let us consider a discrete $d$-dimensional ball $B_v(Q^n/2)$ of radius $Q^n/2$, where $Q$ is some constant, centered at the vertex $v$, and also consider its cover with smaller balls of radius $Q^{n-1}/2$ indexed by $U \subseteq \face 0 {\mathcal{L}}$, i.e., $B_v(Q^n/2) = \bigcup_{u\in U} B_u(Q^{n-1}/2)$.
We assume that the qubits are independently affected by $Z$ error with probability $p$.
Let $\epsilon\subseteq \face k {\mathcal{L}}$ be a randomly chosen error configuration.
We will argue that the probability of the ball $B_v(Q^n/2)$ intersecting a level-$n$ chunk (which itself is composed of two disjoint level-$(n-1)$ chunks) of $\epsilon$ is upper-bounded by some function of the probabilities of the balls $B_u(Q^{n-1}/2)$ intersecting a level-$(n-1)$ chunk of $\epsilon$ for all $u\in U$.
Namely, we will prove the following
\begin{equation}
\pr{\textrm{$B_v(Q^n/2)$ intersects a level-$n$ chunk of $\epsilon$}} \leq
\left(\sum_{u\in U} \pr{\textrm{$B_u(Q^{n-1}/2)$ intersects a level-$(n-1)$ chunk of $\epsilon$}}\right)^2\hspace*{-5pt}.
\label{eq_chunk_iteration}
\end{equation}
Then, the above inequality combined with the property of the lattice $\mathcal{L}$ being locally Euclidean will allow us to show that
if the single-qubit $Z$ error probability $p$ is below $\pthr$ defined in Eq.~(\ref{eq_pth}), then the probability of having a level-$n$ chunk in a randomly chosen error configuration $\epsilon\subseteq\face k {\mathcal{L}}$ is doubly-exponentially small in $n$. Namely,
\begin{equation}
\pr{\textrm{a level-$n$ chunk in $\epsilon$}} \leq |\face 0 {\mathcal{L}}| \lambda^{-2} \left(\frac{p}{\pthr}\right)^{2^n}, 
\label{eq_chunk_suppression}
\end{equation}
where $\lambda = (2Q)^d c_B$, $d$ is the dimension of $\mathcal{L}$ and $c_B$ is a constant defined for $\mathcal{L}$ via Eq.~(\ref{eq_balls}).

We start by defining the sample space $\Omega = 2^{\face k {\mathcal{L}}}$ as the collection of all possible $Z$ error configurations $\epsilon\subseteq \face k {\mathcal{L}}$, where $\face k {\mathcal{L}}$ is countable.
Let an event $\mathcal{E} = \{ \epsilon_1,\epsilon_2,\ldots \} \subseteq \Omega$ be a collection of some error configurations $\epsilon_i \subseteq \face k {\mathcal{L}}$.
We say that the event $\mathcal{E}$ is increasing if $\epsilon\in\mathcal{E}$ implies $\epsilon'\in\mathcal{E}$ for any two configurations $\epsilon\subseteq \epsilon' \subseteq \face k {\mathcal{L}}$.
The disjoint occurrence $\mathcal{E}\circ \mathcal{E}'$ of two events $\mathcal{E}$ and $\mathcal{E}'$ is defined as the collection of configurations $\epsilon \sqcup \epsilon'$, which are the disjoint union of $\epsilon\in\mathcal{E}$ and $\epsilon'\in\mathcal{E}'$, i.e., $\epsilon \cap \epsilon' = \emptyset$.

We can introduce a probability measure $\textrm{pr}:2^\Omega \rightarrow [0,1]$, which assigns the probability $\pr{\mathcal{E}}$ to any event $\mathcal{E}\subseteq {\Omega}$.
We assume that the probability measure satisfies the following condition: for all $k$-simplices $\delta\in\face k {\mathcal{L}}$ the events $\{ \epsilon \subseteq \face k {\mathcal{L}} | \delta\in\epsilon\}$ are independent under the probability measure and
\begin{equation}
\pr{\{ \epsilon \subseteq \face k {\mathcal{L}} | \delta\in\epsilon\}} = 1 - \pr{\{ \epsilon \subseteq \face k {\mathcal{L}} | \delta\not\in\epsilon\}}.
\end{equation}
In other words, we assume that each qubit associated with $\delta\in\face k {\mathcal{L}}$ is independently affected by Pauli $Z$ error with probability $p_\delta$.
For simplicity, we further assume that the error probability $p_\delta$ is the same for every qubit and equal to $p$.
We remark that in the case of the lattice $\mathcal{L}$ with a finite number of $k$-simplices, the probability of the error configuration $\epsilon$ is given by $\pr{\epsilon} = p^{|\epsilon|}(1-p)^{|\face k {\mathcal{L}}| - |\epsilon|}$.
Now we are ready to state the van den Berg and Kesten inequality~\cite{VanDenBerg1985}, which is central to our proof: if $\mathcal{E}$ and $\mathcal{E}'$ are two increasing events, then the probability $\pr{\mathcal{E}\circ\mathcal{E}'}$ of the disjoint occurrence of $\mathcal{E}$ and $\mathcal{E}'$ is upper-bounded by $\pr{\mathcal{E}}\pr{\mathcal{E}'}$.

To prove Eq.~(\ref{eq_chunk_iteration}) and the suppression of high-level chunks in Eq.~(\ref{eq_chunk_suppression}) we define the following increasing events
\begin{itemize}
\item $\mathcal{A}_{v,n} = 
\{ \epsilon\subseteq\face k {\mathcal{L}}|\textrm{$B_v(Q^n/2)$ intersects a level-$n$ chunk of $\epsilon$}\}$,
\item $\mathcal{B}_{v,n} = 
\{ \epsilon\subseteq\face k {\mathcal{L}}|\textrm{$B_v(Q^n)$ contains a level-$n$ chunk of $\epsilon$}\}$,
\item $\mathcal{C}_{v,n} = 
\{ \epsilon\subseteq\face k {\mathcal{L}}|\textrm{$B_v(Q^n)$ contains two disjoint level-$(n-1$ chunks of $\epsilon$}\}$,
\item $\mathcal{D}_{v,n} = 
\{ \epsilon\subseteq\face k {\mathcal{L}}|\textrm{$B_v(Q^n)$ contains a level-$(n-1)$ chunk of $\epsilon$}\}$.
\end{itemize}
In words, an increasing event $\mathcal{A}_{v,n}$ is defined as the set of all error configurations such that the ball $B_v(Q^n/2)$ has a non-zero overlap with a level-$n$ chunk of each of those configurations; similarly $\mathcal{B}_{v,n}$, $\mathcal{C}_{v,n}$ and $\mathcal{D}_{v,n}$.
By definition of chunks we have
\begin{equation}
\pr{\mathcal{A}_{v,n}} \leq \pr{\mathcal{B}_{v,n}} \leq \pr{\mathcal{C}_{v,n}}.
\label{eq_event_chain}
\end{equation}
To relate the probabilities of events $\mathcal{C}_{v,n}$ and $\mathcal{D}_{v,n}$ we first note that the event $\mathcal{C}_{v,n}$ is the disjoint occurrence of $\mathcal{D}_{v,n}$ and $\mathcal{D}_{v,n}$, i.e., $\mathcal{C}_{v,n} = \mathcal{D}_{v,n} \circ \mathcal{D}_{v,n}$.
Then, using the van den Berg and Kesten inequality we find $\pr{\mathcal{C}_{v,n}} \leq  \pr{\mathcal{D}_{v,n}}^2$. 
Now, consider a cover of the ball $B_v(Q^n)$ with balls of radius $Q^{n-1}/2$ indexed by $U\subseteq\face 0 {\mathcal{L}}$.
If the event $\mathcal{D}_{v,n}$ happens, i.e., the ball $B_v(Q^n)$ contains a level-$(n-1)$ chunk, then there exists a vertex $u\in U$, such that the ball $B_u (Q^{n-1}/2)$ has non-zero overlap with that chunk.
Notice that the latter condition describes the event $\mathcal{A}_{u,n-1}$.
Thus, using the union bound we arrive at
\begin{equation}
\pr{\mathcal{D}_{v,n}} \leq \sum_{u\in U} \pr{\mathcal{A}_{u,n-1}},
\end{equation}
which combined with Eq.~(\ref{eq_event_chain}) results in Eq.~(\ref{eq_chunk_iteration}).
Now we invoke the property of the lattice $\mathcal{L}$ being locally Euclidean.
This property guarantees that we can find a cover with $|U| \leq \lambda$, where $\lambda = (2Q)^d c_B$, $d$ is the dimension of $\mathcal{L}$ and $c_B$ is the constant defined via Eq.~(\ref{eq_balls}).
Thus, we obtain
\begin{equation}
\pr{\mathcal{A}_{v,n}} \leq \left(\sum_{u\in U} \pr{\mathcal{A}_{u,n-1}} \right)^2 \leq 
\left(|U| \max_{u\in U} \pr{\mathcal{A}_{u,n-1}} \right)^2 \leq  \left(\lambda \max_{u\in U} \pr{\mathcal{A}_{u,n-1}}\right)^2.
\label{eq_avn}
\end{equation}
Let us denote the probability of the event $\mathcal{A}_{v,n}$ maximized over the set of vertices $\face 0 {\mathcal{L}}$ by
\begin{equation}
p_{\mathcal{A},n} = \max_{v\in\face{0}{\mathcal{L}}} \pr{\mathcal{A}_{v,n}}
\end{equation}
By recursively using Eq.~(\ref{eq_avn}) we can conclude that
\begin{equation}
p_{\mathcal{A},n} \leq (\lambda p_{\mathcal{A},n-1})^2 \leq \ldots \leq \lambda^{-2} \left(\lambda^2 p_{\mathcal{A},0}\right)^{2^n}
\end{equation}
and therefore $p_{\mathcal{A},n}$ is doubly-exponentially small in $n$ for $p_{\mathcal{A},0} < \lambda^{-2}$. 
Note that the event ${\mathcal{A}_{w,0}}$ describes the situation that at least one qubit in the neighborhood of the vertex $w$ is affected by the error configuration $\epsilon$.
Thus, $\pr{\mathcal{A}_{w,0}}$ is upper bounded by $|\star k w| p$ and subsequently $p_{\mathcal{A},0} \leq \max_{w\in\face 0 {\mathcal{L}}} |\star k w| p$.
We observe that if the error probability $p$ is below $\pthr$ defined as
\begin{equation}
\pthr = \left(((2Q)^d c_B)^2 \max_{v\in \face 0 {\mathcal{L}}} |\star k v|\right)^{-1},
\label{eq_pth}
\end{equation}
then $p_{\mathcal{A},0} < \lambda^{-2}$.
Finally, we note that if a randomly chosen error configuration $\epsilon$ contains a level-$n$ chunk, then for some $v\in\face 0 {\mathcal{L}}$ 
the ball $B_v(Q^n/2)$ has to intersect that chunk. Thus, using union bound we get
\begin{equation}
\pr{\textrm{a level-$n$ chunk in $\epsilon$}} \leq \sum_{v\in\face 0 {\mathcal{L}}} \pr{\mathcal{A}_{v,n}}
\leq |\face 0 {\mathcal{L}}| p_{\mathcal{A},n},
\end{equation}
which leads to Eq.~(\ref{eq_chunk_suppression}).
We remark that in the following subsection we will see that $\pthr$ serves as the lower-bound on the Sweep Decoder threshold if we choose  $Q = 6 c_D c_P$ with $c_D$ and $c_P$ defined in the discussion of the properties of causal lattices via Eqs.~(\ref{eq_diameters})~and~(\ref{eq_paths}), respectively.

\subsection{Putting things together}
\label{app_proof_together}

Now we are ready to prove that the Sweep Decoder for the $d$-dimensional toric code of type $k = 2,\ldots,d-1$ has non-zero threshold, which is lower-bounded by $\pthr$ defined in Eq.~(\ref{eq_pth}).
For concreteness, we consider a family of lattices $\mathcal{L}$ on the $d$-dimensional torus of growing linear size $L\rightarrow \infty$, which satisfy conditions (i)--(v) from Appendix~\ref{app_lattices} {and $|\face 0 {\mathcal{L}}| = \textrm{poly}(L)$.}
Note that by the linear size of $\mathcal{L}$ we mean the length of the shortest non-contractible path in $\mathcal{L}$.
We remark that the toric code of type $k$ defined on the $d$-dimensional {torus} has $d \choose k$ logical qubits and the corresponding logical $Z$ operators can be represented as Pauli $Z$ operators with support forming non-contractible $k$-dimensional surfaces.

Recall that we consider the phase-flip noise model, i.e., each qubit is independently affected by a $Z$ error with probability $p$.
Let $\epsilon\subseteq \face k {\mathcal{L}}$ be a randomly chosen error {configuration, i.e., the set of $k$-simplices identified with qubits affected by $Z$ errors.}
The main idea behind the proof is to show that {the Sweep Decoder can successfully correct any level-$n$ chunk of the error configuration $\epsilon$ for all $n < m^* = \lceil\log_Q (L/c_D)\rceil$.
This will imply that the Sweep Decoder can fail only if there exists a level-$m^*$ chunk of the error configuration $\epsilon$.
Using Eq.~(\ref{eq_chunk_suppression}) we arrive at an upper-bound on the decoding failure probability 
\begin{equation}
\pr{\textrm{fail}} \leq \pr{\textrm{level-$m^*$ chunk}} \leq |\face 0 {\mathcal{L}}|\lambda^{-2} \left( \frac{p}{\pthr}\right)^{2^{m^*}}
\leq \lambda^{-2} \textrm{poly}(L) \left( \frac{p}{\pthr}\right)^{\alpha L^\beta},
\label{eq_final}
\end{equation}
which goes to zero in the limit of infinite linear size $L\rightarrow \infty$ for $p < \pthr$.
Here, $\lambda = (2 Q)^d c_B$, $\alpha = c_D^{-\beta}$, $\beta = \log_Q 2$, $Q=6 c_D c_P$, $\pthr$ is a positive constant specified in Eq.~(\ref{eq_pth}), $d$ is the dimension of $\mathcal{L}$, and $c_B$, $c_D$, $c_P$ are the constants defined via Eqs.~(\ref{eq_balls}),~(\ref{eq_diameters}),~(\ref{eq_paths}), respectively.
Finally, we conclude that the threshold of the Sweep Decoder for the $d$-dimensional toric code of type $k = 2, \ldots, d-1$ is lower-bounded by $\pthr$.}

The last piece of the proof is to justify that for all $n<m^*$ any level-$n$ chunk can be successfully corrected by the Sweep Decoder.
First, {let $\epsilon = F_0 \sqcup F_1 \sqcup \ldots $ be the disjoint decomposition of the error configuration $\epsilon$ and} choose a constant $Q = 6 c_D c_P$
At every time step $T=1,2,\ldots$ the Sweep Decoder simultaneously applies the Sweep Rule to every vertex of the lattice and locally modifies the domain wall $\sigma^{(T)}\in \im \partial_k$, where we set $\sigma^{(1)} = \partial_k \epsilon$.
Consider any non-empty subset of errors $M\subseteq \epsilon$, which is a $Q^0$-connected component of $F_0$.
Then, within the first $T_0 = c_D c_P Q^0$ time steps the Sweep Rule removes the part $\partial_k M$ of the domain wall $\partial_k \epsilon$, which corresponds to $M$.
Namely, using the (Connected Components) Lemma~\ref{lemma_connected_comp} we get that $\diam M \leq Q^0$.
\footnote{Note that $\diam M = 1$ implies that all the vertices of $M$ belong to the same $d$-simplex $\delta$.
This, however, does not imply that the Sweep Rule can remove the corresponding part $\partial_k M$ of the domain wall in one step.
Rather, at most $d-1$ time steps may be required, as can be seen in the case of the one-dimensional domain wall visiting all vertices of $\delta$ in a sequence induced by the sweep direction $\vec t$.}
Since $\mathcal{L}$ is locally Euclidean, from Eq.~(\ref{eq_diameters}) we get
$c_D \diam M \geq \diam{\cdia M} \geq \dist{\inf M}{\sup M}$,
which combined with Eq.~(\ref{eq_paths}) results in the bound
$|\cpth{\inf M}{\sup M}| \leq c_D c_P \diam M = T_0$
on the length of any causal path within the causal diamond $\cdia M$.
Note that $\cdia{\partial_k M} \subseteq \cdia M$, and thus from the Removal Property in Lemma~\ref{lemma_properties} we obtain that $\partial_k M$ is guaranteed to be removed by time $T_0$, since $T_0 \geq \max_{\cpth{\inf\sigma^{(1)}}{\sup\sigma^{(1)}}}  |\cpth{\inf\sigma^{(1)}}{\sup\sigma^{(1)}}|$.

Importantly, in this reasoning we use the fact that the distance between $\partial_k M$ and $\partial_k\epsilon \setminus \partial_k M$ is greater than $Q^1/3$.
This fact follows from the (Connected Components) Lemma~\ref{lemma_connected_comp}.
Thus, the time evolution of the rest of the domain wall $\partial_k \epsilon \setminus \partial_k M$ due to the Sweep Rule does not affect the removal of $\partial_k M$.
This follows from the fact that both $\partial_k \epsilon \setminus \partial_k M$ and $\partial_k M$ can only propagate over the distance at most $T_0 = c_D c_P Q^0 \leq Q^1/6$ toward each other; see the Propagation Property in Lemma~\ref{lemma_properties}.
Thus, they will not cover the total distance of more than $Q^1/3$, which is the separation between them.

We remark that the reasoning is applicable to $Q^i$-connected components of $F_i$ for higher levels $i\geq 1$.
We summarize our discussion in the following lemma, which can be analogously proven by induction on the level $i$.
\begin{lemma}
\label{lemma_sweepcorrect}
Let $\epsilon\subseteq\face{k}{\mathcal{L}}$ be an error configuration with the disjoint decomposition
$\epsilon = F_0 \sqcup F_1 \sqcup \ldots$ and choose $Q = 6c_D c_P$. 
Then, for any $Q^i$-connected component $M$ of $F_i$ the corresponding part $\partial_k M$ of the domain wall $\partial_k \epsilon$ is removed by the Sweep Rule within first $T_i = c_D c_P Q^i$ time steps.
Moreover, the removal of $\partial_k M$ is not affected by any other part $\partial_k M'$ of the domain wall, irrespective of the level $j$ of the $Q^j$-connected component $M'$ of $F_j$.
\end{lemma}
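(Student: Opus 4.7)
The plan is to proceed by strong induction on the level $i$, with the base case $i=0$ being essentially the paragraph preceding the lemma statement. For the inductive step, I would fix $i \geq 1$, assume the lemma for all levels $i' < i$, and fix a $Q^i$-connected component $M$ of $F_i$. First, applying the (Connected Components) Lemma~\ref{lemma_connected_comp} yields $\diam M \leq Q^i$ and $\dist{M}{E_i \setminus M} > Q^{i+1}/3 = 2 T_i$, the last equality using $Q = 6 c_D c_P$. Since every vertex of $\partial_k M$ lies among the vertices of $M$, the diameter bound transfers to $\partial_k M$, and the locally-Euclidean estimates Eqs.~(\ref{eq_diameters})~and~(\ref{eq_paths}) then bound the length of every causal path inside $\cdia{\partial_k M}$ by $c_P c_D Q^i = T_i$. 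The (Removal) part of Lemma~\ref{lemma_properties} then guarantees that $\partial_k M$, evolving in isolation, would vanish within $T_i$ steps.

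The crux is to argue that the rest of $\partial_k \epsilon$ does not interfere with $\partial_k M$ during the window $T \in [1, T_i]$. I would split the remaining initial wall into two groups. For any connected component $M'$ at a lower level $i' < i$, the induction hypothesis removes $\partial_k M'$ by time $T_{i'} < T_i$, and (Support) together with the diameter bound confines its local corrections to a region of diameter at most $c_D Q^{i'}$; since $M \subseteq E_{i'}$, the Connected Components Lemma at level $i'$ gives $\dist{M}{M'} > Q^{i'+1}/3 = 2 T_{i'}$, which combined with (Propagation) keeps these corrections away from $\partial_k M$. For any component $M'$ at a level $\geq i$ distinct from $M$, we have $M' \subseteq E_i \setminus M$, hence $\dist{M}{M'} > 2 T_i$; (Propagation) bounds the total advance of each of $\partial_k M$ and $\partial_k M'$ within $T_i$ steps by $T_i$, so the two never meet. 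Locality of the Sweep Rule --- $\phi(v)$ depends only on $\sigma \rest v$ --- then implies that the global evolution restricted to a neighborhood of $M$ coincides with the isolated evolution, so $\partial_k M$ is removed by time $T_i$, closing the induction.

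The main obstacle I expect is formalising this ``independence of evolution.'' Although it is intuitively clear that well-separated sub-walls cannot interact under a local, simultaneous update, turning this into a rigorous inductive statement on time --- namely, that the tracked partial walls remain pairwise disjoint throughout, and that each $\phi(v)$ near one of them sees only that wall's restriction --- requires some care. The choice $Q = 6 c_D c_P$ is precisely tuned so that the Connected Components separation $Q^{i+1}/3$ strictly exceeds the sum of the propagation radii from both sides at every pair of interacting levels; any smaller $Q$ would require rebalancing the constants and potentially invalidate the locality-based independence argument.
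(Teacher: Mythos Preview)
Your proposal is correct and follows essentially the same route as the paper: induction on the level $i$, with the base case exactly as in the paragraph preceding the lemma, and the inductive step driven by the Connected Components Lemma together with the Support, Propagation, and Removal properties of Lemma~\ref{lemma_properties}. The paper in fact only writes out the $i=0$ case and then says the general statement ``can be analogously proven by induction on the level $i$,'' so your write-up of the inductive step --- in particular the split of the complement into lower-level components (handled via the induction hypothesis and the level-$i'$ separation $d(M,M')>Q^{i'+1}/3=2T_{i'}$, using $M\subseteq E_{i'}\setminus M'$) and components at level $\geq i$ (handled via $d(M,E_i\setminus M)>Q^{i+1}/3=2T_i$) --- is precisely the detail the paper leaves implicit. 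Your identification of the ``independence of evolution'' step as the point requiring care, and of the role of $Q=6c_Dc_P$ in ensuring the separation exceeds the combined propagation radii, matches the paper's reasoning exactly.
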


We run the Sweep Decoder for $\tmax = T_{m^*-1} + 1 = O(L)$ time steps.
Then, Lemma~\ref{lemma_sweepcorrect} guarantees that by time $\tmax-1$ any $Q^n$-connected component $M$ of $F_n$ is removed for all $n<m^*$.
Moreover, for each $M$ the Sweep Decoder finds (independently of the other connected components) a correction of the part $\partial_k M$ of the domain wall, which is contained in the causal diamond of $\cdia M$.
This follows from the Support Property in Lemma~\ref{lemma_properties}.
Note that the diameter of the causal diamond $\cdia M$ is smaller than the linear size of the system
\begin{equation}
\diam{\cdia M} \leq c_D\cdot \diam M \leq c_D Q^n < c_D Q^{m^*} \leq L,
\end{equation}
where we use Eq.~(\ref{eq_diameters}) and the (Connected Components) Lemma~\ref{lemma_connected_comp}.
Thus, any operator supported within $\cdia M$ cannot implement a non-trivial logical operator.
This finishes the argument that the Sweep Decoder successfully corrects any level-$n$ chunk for $n<m^*$ by time $\tmax$.

We emphasize that we did not optimize the proof to maximize the threshold lower bound $\pthr$ in Eq.~(\ref{eq_pth}).
To illustrate the discrepancy between the bound and the actual threshold value, let us consider the 3D toric code on the bcc lattice, whose parameters are
$d=3$, $c_B = 24$, $c_D = 2$, $c_P = 3$ and $|\star 2 v | = 36$.
Then, from Eq.~(\ref{eq_pth}) we obtain a lower bound on the Sweep Decoder threshold to be $\pthr \approx 10^{-15}$, whereas the numerically estimated threshold is $p_{\textrm{th}}(1) \approx .0785$; see Fig.~\ref{fig_numerics}.
This example illustrates the importance of numerical estimates of threshold values.

\bibliography{bib_toom_pruned,bib_toom_extra}

\end{document}